\newtheorem{remark}{Remark}
\newtheorem{lemma}{Lemma}
\newtheorem{assumption}{Assumption}
\newtheorem{theorem}{Theorem }
\definecolor{shadecolor}{rgb}{1,0.8,0.3}
\newcommand\BibTeX{{\rmfamily B\kern-.05em \textsc{i\kern-.025em b}\kern-.08em
T\kern-.1667em\lower.7ex\hbox{E}\kern-.125emX}}
\begin{document}

\runninghead{Kayacan}

\title{Sliding Mode Control for Systems with Mismatched Time-Varying Uncertainties via a Self-Learning Disturbance Observer}

\author{Erkan Kayacan}

\affiliation{Senseable City Laboratory, Computer Science $\&$ Artificial Intelligence Laboratory, Massachusetts Institute of Technology}

\corrauth{Erkan Kayacan, Senseable City Laboratory, Computer Science $\&$ Artificial Intelligence Laboratory, \\
Massachusetts Institute of Technology, Cambridge, MA 02139 USA.}
 
\email{erkank@mit.edu}

\begin{abstract}

This paper presents a novel Sliding Mode Control (SMC) algorithm to handle mismatched uncertainties in systems via a novel Self-Learning Disturbance Observer (SLDO). A computationally efficient SLDO is developed within a framework of feedback-error learning scheme in which a conventional estimation law and a Neuro-Fuzzy Structure (NFS) work in parallel. In this framework, the NFS estimates the mismatched disturbances and becomes the leading disturbance estimator while the former feeds the learning error to the NFS to learn system behavior. The simulation results demonstrate that the proposed SMC based on SLDO (SMC-SLDO) ensures the robust control performance in the presence of mismatched time-varying uncertainties when compared to SMC,  integral SMC (ISMC) and SMC based on a Basic Nonlinear Disturbance Observer (SMC-BNDO), and also remains the nominal control performance in the absence of mismatched uncertainties. Additionally, the SMC-SLDO not only counteracts mismatched time-varying uncertainties but also improve the transient response performance in the presence of mismatched time-invariant uncertainties. Moreover, the controller gain of the SMC-SLDO is required to be selected larger than the upper bound of the disturbance estimation error rather than the upper bound of the actual disturbance to guarantee the system stability which results in eliminating the chattering effects on the control signal.  

\end{abstract}

\keywords{Disturbance observers, mismatched uncertainty, neuro-fuzzy structures, learning algorithm, sliding mode control theory.}

\maketitle

\section{Introduction}

One of the most primary demands for control systems is to be insensitive to matched and mismatched disturbances \cite{Asl2017205, erkan2016, erkan2017, 31216683773, RAHMANI2016164, 6606388, KAYACAN20141, KAYACAN2012863}. Sliding Mode Control (SMC) has been studied and applied in industrial applications due to its advantage of robustness against parameter variations and disturbances \cite{Yu2009ie, PRECUP2017176}. Most of the studies on SMC focused only matched disturbances since traditional SMC is insensitive to disturbances, which appear in the same channel with the control inputs \cite{ZHANG201625, KAYACAN2016265}. However, disturbances may appear in different channels from control inputs, which result in mismatched uncertainties \cite{LI2016290}. There are many control methods for systems with mismatched uncertainties in literature, and they can be summed up into three main categories. 

In the first category, the mismatched uncertainties are assumed to be bounded with $H_{2}$ norm, and the stability of the systems with mismatched uncertainties was examined by using Riccati difference equation, adaptive, game theoretic, fuzzy control and LMI-based approaches \cite{Chang2009,PARK200772, Choi2007, CHANG2015276, CHANG2016258, ARQUB2014396, abo2014optimization}. However, this is not a realistic assumption in practice since mismatched uncertainties may have non-zero stead-state values. Therefore, these controllers have been combined with an adaptive model compensation having accurate online parameter estimation \cite{6983626, 6202698, 8409989, kayacanjfr}. These adaptive model compensations consider the neglected high-frequency dynamics and various nonlinearity effects in real-time.  In the second category, integral SMC (ISMC) is proposed in the presence of mismatched time-invariant disturbances \cite{Yu2009ie, Xu2016ie, RAHMANI2016, Rahmani20182}. However, it is acknowledged that an integral action can create undesirable effects, such as overshoots and  large settling times, and deteriorate the nominal control performance \cite{Cao2004ac,Xu2014ie}. In the third category, to overcome these important limitations, disturbance observer based control approaches have been proposed in which mismatched uncertainties are merged into one term, and the main goal is to derive a control law including the disturbance estimate to remove mismatched uncertainties effects on systems \cite{chen2003nonlinear, Ginoya2014,An2016ie, Li2015ie,RAHMANI2016117, KAYACAN2017276, Rahmani2018, Kayacanasj}. SMC based on a basic nonlinear disturbance observer (SMC-BNDO) has been developed to remain the nominal control performance in \cite{Yang2013}. The results have shown that the SMC-BNDO provides nominal performance recovery as well as better control performance as compared to the SMC and ISMC methods. However, if mismatched uncertainties are time-invariant, then BNDO causes bias estimates so that a DO is in want of the capability of estimating mismatched time-varying uncertainties. There exist very popular nonlinear observers, such as particle filtering, moving horizon estimator, and Kalman filter \cite{KAYACAN201578,7525615, KayacanRSS18}. Extended Kalman filter works well if noise on measurements is too small and the linear approximation for a system is available \cite{Haseltine2005}. Other estimators need large computational burden \cite{Daum2005}. Therefore, a computationally efficient disturbance observer is necessary to estimate time-varying uncertainties. 

Neuro-fuzzy structures (NFSs) were broady used for control and identification purposes. The basic idea behind the fusion of fuzzy logic theory and artificial neural networks is that the capability of learning from the input-output data for artificial neural networks and the capability of using expert knowledge for fuzzy logic. Gradient-based and evolutionary algorithms have widely been used for the tuning both the antecedent and consequent parts of neuro-fuzzy systems \cite{Ngo201674}. The main disadvantages of these approaches are that the partial derivatives are required and proving stability is questionable. Therefore, SMC theory-based learning algorithms have been proposed for NFSs \cite{Efe2000}. They ensure robustness and faster convergence rate than the conventional learning algorithms \cite{Kayacan2018chapter}. Furthermore, a control structure consisting of a conventional control law running in parallel with a neuro-fuzzy controller has been proposed for the control of systems and is called as feedback-error learning (FEL) algorithm  \cite{Gomi1993,Tolu2012,Sabahi20141} in which the conventional control law is utilized as the learning error to train the learning algorithms. Thanks to the learning process, the learning controllers take the overall control signal while the output of the conventional controller approaches zero. 

This paper describes SMC for systems with mismatched time-varying uncertainties via a self-learning disturbance observer to overcome the aforementioned limitations of existing SMC methods. The major contributions of this study are as follows: 
\begin{enumerate}
\item FEL algorithm is used to develop a self-learning disturbance observer (SLDO) for the first time.
\item In addition to the proof of the SMC theory-based learning algorithm, the overall system stability is proven by taking the SLDO dynamics into account. 
\item The last major contribution of this paper is to develop an SMC based on an SLDO. A novel control law and a novel sliding surface are derived by taking the disturbance and disturbance rate estimates into account to be able to drive the system states to the desired state trajectory in the presence of mismatched time-varying uncertainties. 
\end{enumerate} 

The minor contributions of this study are as follows: 
\begin{enumerate}
\item The SMC-SLDO remains the nominal control performance in the absence of mismatched disturbances and provides better control performance than SMC-BNDO in the presence of mismatched time-variant uncertainties. 
\item The controller gain of the SMC-SLDO is required to be larger than the upper bound of the disturbance estimation error rather than the upper bound of the disturbance in SMC and ISMC to guarantee the system stability; therefore, the chattering effects on the control signal are eliminated significantly. 
\end{enumerate}

The paper consists of six sections: The problem formulation is given in Section \ref{sec_prob_form}. The SLDO based on the combination of the BNDO and NFS is developed and its stability is proven in Section \ref{sec_SLDO}. The SMC-SLDO is designed, and the overall system stability by taking the SLDO dynamics into account is proven in Section \ref{sec_smcsldo}. The simulation results are presented in Section \ref{sec_simulation}. Finally, some conclusions are drawn in Section \ref{sec_conc}.

\section{Problem Formulation}\label{sec_prob_form}

A second-order nonlinear system with mismatched disturbances is formulated as follows:
\begin{eqnarray}\label{eq_nonlinearsystem}
\dot{\textbf{x} }& = & \textbf{g}_{1}(\textbf{x}) + \textbf{g}_{2}(\textbf{x}) u + \textbf{z} d(t)
\end{eqnarray}
where $\textbf{x}$ and $u$ are written instead of $\textbf{x}(t)$ and $u(t)$ for convenience, $u$ is the control input, $\textbf{x}=[x_{1},x_{2}]^{T}$ is the state vector, $\textbf{z}=[1,0]^{T}$ is the disturbance coefficients vector, $d(t)$ is the disturbance, $\textbf{g}_{1}(\textbf{x})=[x_{2}, a(\textbf{x})]^{T}$ and $\textbf{g}_{2}(\textbf{x})=[0, b(\textbf{x})]^{T}$ are nonlinear functions. The rotor speed of a permanent magnet synchronous motor and a single link manipulator with the elasticity of the joint are potential practical systems in the form of \eqref{eq_nonlinearsystem} \cite{Yang20132287, Ginoya2014}.

\subsection{Sliding Mode Control}\label{sec_smc}

The sliding surface and the control law are written as follows:
\begin{eqnarray}\label{eq_tsmc_slidingsurface}
s &=&  x_{2} + \lambda x_{1}  \\ \label{eq_tsmc_controllaw}
u &=& - b^{-1}(x) \Big( a(\textbf{x}) + \lambda x_{2} + k \textrm{sgn}(s) \Big)
\end{eqnarray}
where $\lambda$ is the slope of the sliding surface, $k$ is the controller gain, and they are positive, i.e., $\lambda, k>0$.

By combining \eqref{eq_nonlinearsystem}-\eqref{eq_tsmc_controllaw}, the time-derivative of the sliding surface is obtained as follows:
\begin{equation}\label{eq_tsmc_slidingsurface_dot}
\dot{s} =  - k \textrm{sgn}(s) + \lambda d(t)
\end{equation}
If the controller gain $k$ in \eqref{eq_tsmc_controllaw} is large enough, i.e., $k>\lambda d^{*}$ where $d^{*}$ is the upper bound of the disturbance, the sliding surface converges to zero in finite time. When the sliding surface converges to zero, the sliding motion is obtained as follows:
\begin{equation}\label{eq_tsmc_slidingsurfaceerror}
\dot{x}_{1} + \lambda x_{1} =  d(t)
\end{equation}

\begin{remark}\label{remark_tsmc}
If the system reaches the sliding surface in finite time, \eqref{eq_tsmc_slidingsurfaceerror} shows that the states cannot be driven to the desired state trajectory under the control law proposed in \eqref{eq_tsmc_controllaw}. This purports why the SMC is delicate to mismatched disturbance.
\end{remark}

\subsection{Integral Sliding Mode Control}\label{sec_ismc}

To handle mismatched uncertainty problem, an integral action is added to the sliding surface. The integral sliding surface and ISMC control law are written as follows \cite{slotine1991}:
\begin{eqnarray}\label{eq_ismc_slidingsurface}
s &=& x_{2}+ 2 \lambda  x_{1}  + \lambda^{2} \int^t_0 x_{1} \,dt \\ \label{eq_ismc_controllaw}
u &=& -b^{-1}(\textbf{x}) \Big( a(\textbf{x}) + 2 \lambda x_{2} + \lambda^{2} x_{1} + k \textrm{sgn}(s) \Big)
\end{eqnarray}

The time-derivative of the integral sliding surface is obtained by considering the nonlinear equation in  \eqref{eq_nonlinearsystem}, aforementioned integral sliding surface in \eqref{eq_ismc_slidingsurface} and ISMC control law in \eqref{eq_ismc_controllaw} as follows:
\begin{equation}\label{eq_ismc_slidingsurface_dot}
\dot{s} =  - k \textrm{sgn}(s) + 2 \lambda d(t) 
\end{equation}

If the controller gain $k$ in the control law in \eqref{eq_ismc_controllaw} is large enough, i.e., $k>2 \lambda d^{*}$ where $d^{*}$ is the upper bound of the disturbance, the integral sliding surface converges to zero in finite time. When the integral sliding surface converges to zero, the sliding motion is obtained as follows:
\begin{equation}\label{eq_ismc_slidingsurfaceerror}
\ddot{x}_{1} + 2 \lambda \dot{x}_{1} +  \lambda^{2} x_{1} =  \dot{d}(t)
\end{equation}
This purports that if the system reaches the integral sliding surface in finite time and the disturbance is time-invariant, i.e., $\dot{d}(t)=0$, the states can be driven to the desired state trajectory.

\begin{remark}\label{remark_ismc} Equation \eqref{eq_ismc_slidingsurfaceerror} shows that if disturbance is time-invariant, i.e., $\dot{d}(t)=0$, ISMC is robust against mismatched time-invariant disturbances so that the offset is removed. However, the integral action might deteriorate the performance, such as overshoot and worse control performance than the nominal control performance achieved by SMC in the absence of mismatched disturbances. 
\end{remark}

It is to be noted that there are several methods to overcome the limitation of ISMC in the absence of mismatched disturbances, such as an adequate initial condition in \cite{577594}. However, since it is a troublesome process in control design in practice, this issue is left out of account in this study.

\subsection{SMC based on a BNDO for Mismatched Time-Invariant Uncertainties}\label{sec_smcbndo}

It is required to estimate the disturbance $d(t)$ in \eqref{eq_nonlinearsystem} in real-time to achieve a robust control performance. The BNDO proposed in \cite{chen2003nonlinear,Chen2004,Yang2011} is written as below:
\begin{eqnarray}\label{eq_disturbanceobserver}
\dot{p} &=& - \textbf{l} \textbf{z} p - \textbf{l} \Big( \textbf{z} \textbf{l} \textbf{x} + \textbf{g}_{1}(\textbf{x}) + \textbf{g}_{2}(\textbf{x}) u \Big) \nonumber \\
\hat{d}_{BN} &=& p + \textbf{l} \textbf{x}
\end{eqnarray}
where $\hat{d}_{BN}$ is the estimated disturbance, $p$ is the internal state  and $\textbf{l} = [l_{1}, l_{2}]$ is the observer gain of the BNDO.

The sliding surface and control law based on the BNDO proposed in \cite{Yang2013,Yang20132287} are formulated as follows:
\begin{eqnarray}\label{eq_smcdo_slidingsurface}
s &=& x_{2} + \lambda x_{1} + \hat{d}_{BN} \\ \label{eq_smcdo_controllaw}
u &=& -b^{-1} (\textbf{x}) \Big( a(\textbf{x}) + \lambda (x_{2} + \hat{d}_{BN}) + k \textrm{sgn}(s) \Big)
\end{eqnarray}
Then, the time-derivative of the sliding surface is calculated taking \eqref{eq_nonlinearsystem}, \eqref{eq_smcdo_slidingsurface} and \eqref{eq_smcdo_controllaw} into account as
\begin{equation}\label{eq_smcdo_ssdot}
\dot{s} = -k \textrm{sgn}(s) + (\lambda + lz) e_{d}
\end{equation}
where $e_{d}= d(t) - \hat{d}_{BN}$ is the disturbance estimation error. If $k$ is large enough, i.e., $k> (\lambda + \textbf{l} \textbf{z}) e_{d}^{*}$ where $e_{d}^{*}$ is the upper bound of the disturbance estimation error, the sliding surface converges to zero in finite time. When the sliding surface converges to zero, the sliding motion is obtained as
\begin{equation}\label{eq_smcdo_slidingsurfaceerror}
 \dot{x}_{1} + \lambda x_{1} = e_{d}
\end{equation}
The disturbance rate estimate is derived by considering \eqref{eq_disturbanceobserver} as follows:
\begin{equation}
\dot{\hat{d}}_{BN} = \dot{p} + \textbf{l} \dot{\textbf{x}} 
\end{equation}
The equality of $\dot{p}$ in \eqref{eq_disturbanceobserver} is inserted into the equation above
\begin{eqnarray}
\dot{\hat{d}}_{BN} &=& - \textbf{l} \textbf{z} p - \textbf{l} \Big( \textbf{z} \textbf{l} \textbf{x} + \textbf{g}_{1}(\textbf{x}) + \textbf{g}_{2}(\textbf{x}) u \Big) + \textbf{l} \dot{\textbf{x}} \nonumber \\
  &=&  - \textbf{l} \textbf{z} \underbrace{(p+\textbf{l} \textbf{x})}_{\hat{d}_{BN}} + l \underbrace{ \Big( \dot{\textbf{x}} - \textbf{g}_{1}(\textbf{x}) - \textbf{g}_{2}(\textbf{x}) u \Big)}_{\textbf{z} d}
\end{eqnarray}
By taking \eqref{eq_nonlinearsystem} and \eqref{eq_disturbanceobserver} into account, the time-derivative of the estimated disturbance by the BNDO is calculated as follows:
\begin{equation}\label{eq_disturbanceobserver_dot_dBN}
\dot{\hat{d}}_{BN} = \textbf{l} \textbf{z}   e_{d}
\end{equation}
If the disturbance rate is inserted into aforementioned equation, the error dynamics for the BNDO are calculated as
\begin{equation}\label{eq_disturbanceobserver_error}
\dot{e}_{d} + \textbf{l} \textbf{z}  e_{d} = \dot{d} (t)
\end{equation}

\begin{assumption}\label{assumption_smcdo_1}
The disturbance rate is bounded and $\displaystyle \lim_{ t \to \infty} \dot{d}(t) = 0$.
\end{assumption}

If Assumption \ref{assumption_smcdo_1} is fulfilled, then \eqref{eq_smcdo_slidingsurfaceerror} and \eqref{eq_disturbanceobserver_error} are obtained as follows:
\begin{eqnarray}\label{eq_smcdo_system_1}
 \dot{x}_{1} + \lambda x_{1} & = & e_{d}  \\ \label{eq_smcdo_system_2}
\dot{e}_{d}  + \textbf{l} \textbf{z}   e_{d}  & = & 0
\end{eqnarray}

\begin{lemma}\label{lemma_smcdo_1}
\cite{chen2003nonlinear, KAYACAN2017276, Kayacanasj}: If $\textbf{l} \textbf{z}$ is positive, i.e., $\textbf{l} \textbf{z}>0$, the disturbance error dynamics in \eqref{eq_smcdo_system_2} approach zero exponentially. 
\end{lemma}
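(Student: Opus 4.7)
The statement reduces to analyzing the scalar linear homogeneous ODE
\begin{equation*}
\dot{e}_{d} + (\textbf{l}\textbf{z})\, e_{d} = 0,
\end{equation*}
so the plan is to prove exponential convergence in the standard two ways and pick whichever is cleanest for the paper's style. My primary approach would be a direct Lyapunov argument, because it mirrors the SMC-flavored stability analyses used elsewhere in the paper and generalizes more easily if the authors later want to perturb the right-hand side (for example by reinserting $\dot{d}(t)$ as in \eqref{eq_disturbanceobserver_error}).

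First I would define the candidate Lyapunov function $V(e_{d}) = \tfrac{1}{2} e_{d}^{2}$, which is positive definite and radially unbounded in the scalar variable $e_{d}$. Next I would differentiate along trajectories of \eqref{eq_smcdo_system_2} to obtain $\dot{V} = e_{d}\dot{e}_{d} = -(\textbf{l}\textbf{z})\, e_{d}^{2} = -2(\textbf{l}\textbf{z})\, V$. Under the hypothesis $\textbf{l}\textbf{z}>0$, this is strictly negative whenever $e_{d}\neq 0$, which already yields asymptotic stability. To upgrade this to \emph{exponential} stability, I would integrate the differential inequality (equality, in fact) $\dot{V} \le -2(\textbf{l}\textbf{z})\, V$ via the comparison lemma, giving $V(t) \le V(0)\, e^{-2(\textbf{l}\textbf{z})t}$, and hence
\begin{equation*}
|e_{d}(t)| \le |e_{d}(0)|\, e^{-(\textbf{l}\textbf{z})\, t},
\end{equation*}
which is exponential convergence to zero with rate $\textbf{l}\textbf{z}$.

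As a cross-check I would also note the purely algebraic derivation: \eqref{eq_smcdo_system_2} is a constant-coefficient linear first-order ODE whose unique closed-form solution is $e_{d}(t) = e_{d}(0)\exp(-(\textbf{l}\textbf{z})t)$, so the conclusion is immediate once $\textbf{l}\textbf{z}>0$ is assumed. Given that $\textbf{l}=[l_{1},l_{2}]$ and $\textbf{z}=[1,0]^{T}$, the condition $\textbf{l}\textbf{z}>0$ is simply $l_{1}>0$, which I would mention to connect the abstract gain condition back to an explicit design rule for the BNDO.

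There is essentially no obstacle here: the lemma is a one-line consequence of linear ODE theory, and the only care needed is to state the result as exponential (not merely asymptotic) convergence and to make the rate explicit so that it can be invoked cleanly later when the SLDO stability proof combines \eqref{eq_smcdo_system_1} and \eqref{eq_smcdo_system_2} in cascade.
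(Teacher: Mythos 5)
Your proposal is correct. Note that the paper itself does not prove this lemma at all --- it is stated with citations to \cite{chen2003nonlinear, KAYACAN2017276, Kayacanasj} and invoked as a known result --- so there is no in-paper argument to compare against. Your two derivations (the Lyapunov function $V=\tfrac{1}{2}e_{d}^{2}$ yielding $\dot{V}=-2(\textbf{l}\textbf{z})V$, and the explicit solution $e_{d}(t)=e_{d}(0)\exp(-(\textbf{l}\textbf{z})t)$ of the constant-coefficient scalar ODE) are both valid and immediately give exponential convergence with rate $\textbf{l}\textbf{z}$ when $\textbf{l}\textbf{z}>0$. The Lyapunov version is the more useful of the two to record, since it is the form that survives the perturbed case \eqref{eq_disturbanceobserver_error} where $\dot{d}(t)$ reappears on the right-hand side, and it matches the style of the paper's Theorems 1--3. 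Your closing observation that $\textbf{l}\textbf{z}=l_{1}$ (because $\textbf{z}=[1,0]^{T}$) is a worthwhile addition: it turns the abstract gain condition into the concrete design rule $l_{1}>0$, which the paper never states explicitly even though it uses $\textbf{l}=[5,0]^{T}$ in the simulations.
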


Lemma \ref{lemma_smcdo_1} purports that the disturbance estimation can follow the actual disturbance in \eqref{eq_nonlinearsystem} globally exponentially if Assumption \ref{assumption_smcdo_1} has been fulfilled.

\begin{lemma}\label{lemma_smcdo_2}
\cite{khalil1996nonlinear, KAYACAN2017276, Kayacanasj}: If a nonlinear system $\dot{\textbf{x}}=F(\textbf{x},u)$ is input-state stable and $\displaystyle \lim_{ t \to \infty} u(t) = 0$, then the state $\displaystyle \lim_{ t \to \infty} \textbf{x}(t) = 0$.
\end{lemma}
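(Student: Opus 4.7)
\textbf{Proof proposal for Lemma \ref{lemma_smcdo_2}.}
The plan is to exploit the input-to-state stability (ISS) characterization directly. By definition, there exist a class-$\mathcal{KL}$ function $\beta$ and a class-$\mathcal{K}$ function $\gamma$ such that, for every initial time $t_{0}$, every initial state $\textbf{x}(t_{0})$, and every admissible input $u$, the solution satisfies
\begin{equation*}
\|\textbf{x}(t)\| \;\leq\; \beta(\|\textbf{x}(t_{0})\|,\, t-t_{0}) \;+\; \gamma\!\Big(\sup_{t_{0}\leq \tau\leq t}\|u(\tau)\|\Big)
\end{equation*}
for all $t\geq t_{0}$. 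The idea is to show that, given $\lim_{t\to\infty}u(t)=0$, both terms on the right can be driven below any prescribed $\varepsilon>0$ by the right choice of a restart time.

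First I would fix an arbitrary $\varepsilon>0$. Since $u(t)\to 0$ and $\gamma\in\mathcal{K}$ is continuous with $\gamma(0)=0$, there exists $T_{1}>0$ such that $\gamma\bigl(\sup_{\tau\geq T_{1}}\|u(\tau)\|\bigr)<\varepsilon/2$. Next, I would use ISS on the interval $[0,T_{1}]$ to conclude that $\textbf{x}(T_{1})$ is bounded: concretely, $\|\textbf{x}(T_{1})\|\leq \beta(\|\textbf{x}(0)\|,T_{1})+\gamma(\sup_{0\leq\tau\leq T_{1}}\|u(\tau)\|)$, which is finite because $u$ is bounded (every sequence converging to zero is bounded). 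This guarantees that the "new initial condition" obtained by restarting the ISS estimate at $t_{0}=T_{1}$ is well-defined.

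Then I would re-apply the ISS bound from the initial time $T_{1}$:
\begin{equation*}
\|\textbf{x}(t)\|\;\leq\;\beta(\|\textbf{x}(T_{1})\|,\, t-T_{1})\;+\;\gamma\!\Big(\sup_{T_{1}\leq \tau\leq t}\|u(\tau)\|\Big),\qquad t\geq T_{1}.
\end{equation*}
The second term is already $<\varepsilon/2$ by the choice of $T_{1}$. Because $\beta(\|\textbf{x}(T_{1})\|,\cdot)$ is a class-$\mathcal{L}$ function, it decays to zero as its second argument grows, so there exists $T_{2}\geq T_{1}$ with $\beta(\|\textbf{x}(T_{1})\|,t-T_{1})<\varepsilon/2$ for all $t\geq T_{2}$. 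Combining these yields $\|\textbf{x}(t)\|<\varepsilon$ for all $t\geq T_{2}$, which is precisely $\lim_{t\to\infty}\textbf{x}(t)=0$.

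The main obstacle I anticipate is purely conceptual rather than technical: one has to carefully justify the "restart" step, i.e., verify that the ISS inequality may be reinitialized at $t_{0}=T_{1}$ with state $\textbf{x}(T_{1})$ using only the tail $\{u(\tau):\tau\geq T_{1}\}$. This follows from the time-invariance (or causality) of the ISS estimate, but it is the subtle point that distinguishes this argument from a naive application of the global bound, in which the supremum of $\|u\|$ would be taken over $[0,t]$ and would not necessarily shrink to zero.
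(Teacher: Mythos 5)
Your argument is correct. Note that the paper does not actually prove this lemma --- it is stated as a known result with a citation to Khalil's \emph{Nonlinear Systems} --- so there is no in-paper proof to compare against; your write-up is precisely the standard textbook argument for the converging-input/converging-state property of ISS systems. The structure is sound: fix $\varepsilon>0$, pick $T_{1}$ so that the tail of $u$ makes the gain term smaller than $\varepsilon/2$, use the ISS estimate on $[0,T_{1}]$ to confirm $\textbf{x}(T_{1})$ is finite, restart the estimate at $t_{0}=T_{1}$, and let the $\mathcal{KL}$ term decay below $\varepsilon/2$. You also correctly flag the one genuinely subtle point: the restart is licensed by the time-invariance of $\dot{\textbf{x}}=F(\textbf{x},u)$ (the ISS estimate holds uniformly in the initial time), and without it the naive bound with $\sup_{0\leq\tau\leq t}\|u(\tau)\|$ would only give boundedness, not convergence. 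The only cosmetic remark is that in the restarted bound the relevant quantity is $\sup_{T_{1}\leq\tau\leq t}\|u(\tau)\|\leq\sup_{\tau\geq T_{1}}\|u(\tau)\|$, which is exactly what your choice of $T_{1}$ controls, so the chain of inequalities closes as claimed.
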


As indicated in Lemma \ref{lemma_smcdo_2}, this system in \eqref{eq_smcdo_system_1} is stable if the slope of the sliding surface is positive, i.e., $\lambda>0$. The state satisfies $\displaystyle \lim_{ t \to \infty} e_{d}(t) = 0$ and $\displaystyle \lim_{ t \to \infty} x_{1}(t) = 0$. 

\begin{remark} \label{remark_smcdo_1}
If disturbance is time-invariant as indicated in Assumption \ref{assumption_smcdo_1}, the SMC based on the BNDO (SMC-BNDO) is robust against time-invariant mismatched disturbances, and thus the offset is eliminated. Moreover, if there does not exist any disturbance, the SMC-BNDO remains the nominal control performance achieved by SMC.
\end{remark}

\begin{remark} \label{remark_smcdo_2}
If disturbance rate $\dot{d} (t)$ is not equal to zero, the BNDO  error dynamics cannot approach zero; therefore, the BNDO causes a bias.  The SMC-BNDO is not robust against mismatched time-varying disturbances. Similar observers were designed and the same drawback was reported in literature \cite{Chen2000,chen2003nonlinear,Yang2013}. 
\end{remark}

\section{Online Self-Learning Disturbance Observer}\label{sec_SLDO}

The BNDO fails to provide unbiased estimates in case of the existing of time-varying disturbances. For this reason, an SLDO, in which the BNDO works in series with FEL algorithm, is developed in this section. In FEL scheme, the conventional estimator works in parallel with an NFS as the schematic diagram of the SLDO is shown in Fig. \ref{fig_sldo_diagram}. Conventional estimation law establishes a sliding manifold and new estimation law is formulated as:
\begin{equation}\label{eq_sldo_estimationlaw}
\dot{\hat{d}}_{SL} = \dot{\hat{d}}_{BN} + \tau_{c} - \tau_{n}  
\end{equation}
where $\hat{d}_{BN}$, $\tau_{c}$ and $\tau_{n}$ denote the outputs of BNDO, conventional estimation law and NFS, respectively. The conventional estimation law is written below:
\begin{equation}\label{eq_sldo_tauc}
\tau_{c} = \ddot{\hat{d}}_{BN} + \textbf{l} \textbf{z} \dot{\hat{d}}_{BN} 
\end{equation}
where  $\textbf{l} \textbf{z}$ is positive, i.e., $\textbf{l} \textbf{z} >0$ as stated in Lemma \ref{lemma_smcdo_1}.
\begin{figure}[t!]
  \centering
  \includegraphics[width=4.5in]{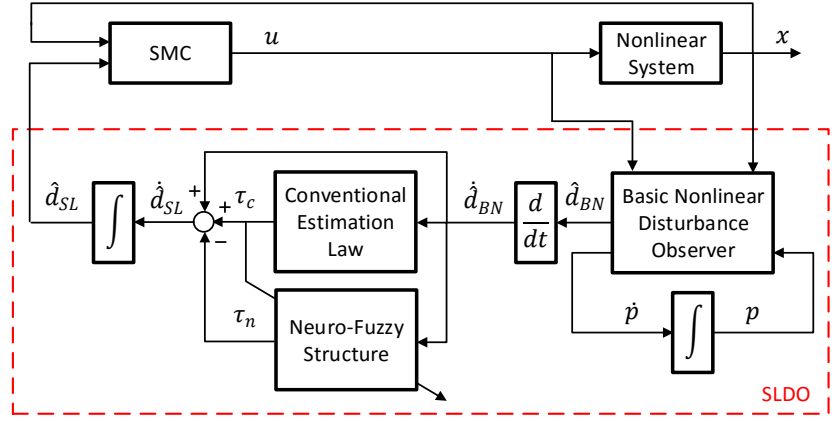}\\
  \caption{Schematic diagram for the control stucture and self-learning disturbance observer.}\label{fig_sldo_diagram}
\end{figure}

\subsection{Neuro-Fuzzy Structure}\label{sec_neuro-fuzzy}

An interval Takagi-Sugeno-Kang (TSK) fuzzy \emph{if-then} rule $R_{ij}$ is written as:\begin{equation}\label{eq_Rlinearfunction}
R_{ij}: \;\; \textrm{If} \; \xi_1 \; \textrm{is} \;\; A_{1i} \;\; \textrm{and} \; \xi_2 \; \textrm{is} \;\; A_{2j}, \;\; \textrm{then} \; f_{ij}=\Upsilon_{ij}
\end{equation}
where $\xi_1=\dot{\hat{d}}_{BN}$ and $\xi_2= \ddot{\hat{d}}_{BN}$ are respectively the inputs while $A_{1i}$ and $A_{2j}$ denote fuzzy sets corresponding to inputs. The zeroth-order function $f_{ij}$ is the output of the rules while $\Upsilon_{ij}$ is the consequent part of the rules and the total number of the rules are equal to $K=I \times J$ in which $I$ and $J$ are the total number of the inputs. Since the time-derivatives of the the output of the BNDO are the inputs of the neuro-fuzzy structure, the derivations with respect to time must be taken with a filter (e.g., similar to the transfer function $N/(1 + N/s)$) for noisy disturbances in real-time to obtain $\dot{\hat{d}}_{BN}$ and $\ddot{\hat{d}}_{BN}$.

The Gaussian membership functions are formulated as:
\begin{equation}\label{eq_mu1}
\mu_{1i}(\xi_1) = \exp\Bigg(-\bigg(\frac{\xi_{1}-c_{1i}}{\sigma_{1i}}\bigg)^2\Bigg)
\end{equation}
\begin{equation}\label{eq_mu2}
\mu_{2j}(\xi_2) = \exp\Bigg(-\bigg(\frac{\xi_{2}-c_{2j}}{\sigma_{2j}}\bigg)^2\Bigg)
\end{equation}
where $\mu_{1i}(\xi_1)$ is the membership values for the input 1 where $c_{1i}$ and $\sigma_{1i}$ are respectively the mean and standard deviation of this membership function for the input 1,  $\mu_{2j}(\xi_2)$ is the membership values for the input 2 where $c_{2j}$ and $\sigma_{2j}$ are respectively the mean and standard deviation of this membership function for the input 2. 

The output of each node is multiplied with all incoming signals  to calculate the firing strength of rules:
\begin{equation}\label{eq_wij}
w_{ij} = \mu_{1i}(\xi_1)  \mu_{2j}(\xi_2) 
\end{equation}
\eqref{eq_mu1} and \eqref{eq_mu2} are inserted into \eqref{eq_wij}, it is obtained as:
\begin{equation}\label{eq_wij_2}
w_{ij} = \exp\Bigg(-\bigg(\frac{\xi_{1}-c_{1i}}{\sigma_{1i}}\bigg)^2 -\bigg(\frac{\xi_{2}-c_{2j}}{\sigma_{2j}}\bigg)^2\Bigg)
\end{equation}

The output of the neuro-fuzzy structure is written as follows:
\begin{equation}\label{eq_taun}
\tau_n= \sum_{i=1}^{I}\sum_{j=1}^{J}f_{ij}\widetilde{w}_{ij} 
\end{equation}
where $\widetilde{{w}}_{ij}$ is the normalized firing strengths of the output signal of the neuron $ij$ which are calculated as follows:
\begin{eqnarray}\label{eq_wij_normalized}
\widetilde{w}_{ij} = \frac{w_{ij}}{\sum_{i=1}^{I}\sum_{j=1}^{J}w_{ij}}
\end{eqnarray}

The vectors are defined as:
\begin{eqnarray}
\widetilde{\textbf{W}} &=& [\widetilde{w}_{11} \; \widetilde{w}_{12} \dots \widetilde{w}_{21} \dots \widetilde{w}_{ij} \dots \widetilde{w}_{IJ}(t)]^{T} \nonumber \\
\textbf{F} &=& [f_{11} \; f_{12} \dots f_{21} \dots f_{ij} \dots f_{IJ}] \nonumber
\end{eqnarray}
where these normalized firing strengths take values between $0$ and $1$, i.e., $0<\widetilde{w}_{ij} \leq 1$. Additionally, $\sum_{i=1}^{I}\sum_{j=1}^{J}\widetilde{w}_{ij} = 1$.

\subsection{SMC Theory-based Learning Algorithm}\label{sec_SMClearning}

The sliding surface $\eta$ for the learning algorithm is formulated as follows:
\begin{equation}\label{eq_sldo_slidingfsurface}
\eta  \left( \tau_{c} \right)= \tau_{c} 
\end{equation}
where $\tau_{c}$ is utilized as the learning error to train the SMC theory-based learning algorithm, which adjusts the mean and standard deviation of the membership functions in the antecedent part, and also the consequent part in the NFS. The adaptation rules of the NFS are formulated as follows:
\begin{equation} \label{eq_c_1i}
\dot{c}_{1i} = \dot{\xi}_{1} + (\xi_{1} - c_{1i}) \alpha_{1} \textrm{sgn}\left(  \eta\right)
\end{equation}
\begin{equation} \label{eq_c_2j}
\dot{c}_{2j} = \dot{\xi}_{2} + (\xi_{2} - c_{2j}) \alpha_{1} \textrm{sgn}\left( \eta\right)
\end{equation}
\begin{equation}\label{eq_sigma_1i}
\dot{\sigma}_{1i} = - \bigg( \sigma_{1i} + \frac{ (\sigma_{1i} )^3}{(\xi_{1} - c_{1i})^2} \bigg) \alpha_{1}  \textrm{sgn}\left( \eta \right)
\end{equation}
\begin{equation}\label{eq_sigma_2j}
\dot{\sigma}_{2j} = - \bigg( \sigma_{2j} + \frac{ (\sigma_{2j} )^3}{(\xi_{2} - c_{2j})^2} \bigg) \alpha_{1}  \textrm{sgn}\left( \eta \right)
\end{equation}
\begin{equation}\label{eq_f_ij}
\dot{f}_{ij} =-\frac{\widetilde{W} }{\widetilde{W}^T\widetilde{W}}\alpha_{2} sgn( \eta )
\end{equation}
where $\alpha_{1}$, $\alpha_{2}$ are the learning rates and positive, i.e., $\alpha_{1}, \alpha_{2}>0$.

\begin{assumption}\label{ass_boundedBNSL}
The estimated disturbances by the BNDO and SLDO, and their first and second time derivatives are assumed to be bounded:
\begin{eqnarray}
\mid \hat{d}_{BN} \mid < \hat{d}^{*}_{BN},  \quad \mid \dot{\hat{d}}_{BN}  \mid < \dot{\hat{d}}^{*}_{BN},  \quad \mid \ddot{\hat{d}}_{BN}  \mid < \ddot{\hat{d}}^{*}_{BN}, \quad
\mid \hat{d}_{SL} \mid < \hat{d}^{*}_{SL},  \quad \mid \dot{\hat{d}}_{SL}  \mid < \dot{\hat{d}}^{*}_{SL},  \quad \mid \ddot{\hat{d}}_{SL}  \mid < \ddot{\hat{d}}^{*}_{SL} 
\end{eqnarray}
where $\hat{d}^{*}_{BN}$, $\dot{\hat{d}}^{*}_{BN}$, $\ddot{\hat{d}}^{*}_{BN}$, $\hat{d}^{*}_{SL}$, $\dot{\hat{d}}^{*}_{SL}$ and $\ddot{\hat{d}}^{*}_{SL}$ are considered as positive constants.
\end{assumption}

\begin{theorem}[]\label{theorem1}
If adaptations rules are formulated as in \eqref{eq_c_1i}-\eqref{eq_f_ij} and the learning rate $\alpha_{2}$ is large enough, i.e., $\alpha_{2} > \ddot{\hat{d}}^{*}_{BN}  + \ddot{\hat{d}}^{*}_{SL}$, this guarantees that $\tau_{c}$ approaches zero.
\end{theorem}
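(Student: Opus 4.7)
The plan is to run the standard Lyapunov-style reaching argument on the candidate $V=\tfrac{1}{2}\eta^{2}=\tfrac{1}{2}\tau_{c}^{2}$, reducing the proof to a careful computation of $\dot{\tau}_{c}$. To get a usable expression for $\dot{\tau}_{c}$, differentiate the SLDO update law \eqref{eq_sldo_estimationlaw} once in time to obtain $\ddot{\hat{d}}_{SL}=\ddot{\hat{d}}_{BN}+\dot{\tau}_{c}-\dot{\tau}_{n}$, which rearranges to
\begin{equation*}
\dot{\tau}_{c}=\ddot{\hat{d}}_{SL}-\ddot{\hat{d}}_{BN}+\dot{\tau}_{n}.
\end{equation*}
The whole proof then hinges on computing $\dot{\tau}_{n}$ explicitly from the adaptation laws \eqref{eq_c_1i}--\eqref{eq_f_ij} and using the boundedness assumption \ref{ass_boundedBNSL} for the remaining two terms.

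The main step is to show that the antecedent updates \eqref{eq_c_1i}--\eqref{eq_sigma_2j} make $\dot{\widetilde{w}}_{ij}\equiv 0$, so that only the consequent update contributes to $\dot{\tau}_{n}$. I would differentiate \eqref{eq_wij_2} logarithmically,
\begin{equation*}
\frac{\dot{w}_{ij}}{w_{ij}}=-\frac{2(\xi_{1}-c_{1i})(\dot{\xi}_{1}-\dot{c}_{1i})}{\sigma_{1i}^{2}}+\frac{2(\xi_{1}-c_{1i})^{2}\dot{\sigma}_{1i}}{\sigma_{1i}^{3}}+(\text{analogous }j\text{ terms}),
\end{equation*}
and then substitute \eqref{eq_c_1i}--\eqref{eq_sigma_2j}. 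The $\dot{c}$-rule is exactly what is needed to cancel $\dot{\xi}$, leaving an $(\xi-c)^{2}\alpha_{1}\mathrm{sgn}(\eta)/\sigma^{2}$ residue, and the $\sigma$-rule --- with its seemingly ad-hoc $\sigma^{3}/(\xi-c)^{2}$ piece --- is designed to cancel precisely that residue while leaving an $i,j$-independent constant $-4\alpha_{1}\mathrm{sgn}(\eta)$. Because every $w_{ij}$ inherits the same logarithmic derivative, the normalized weights $\widetilde{w}_{ij}=w_{ij}/\sum w_{kl}$ satisfy $\dot{\widetilde{w}}_{ij}=0$. With that in hand, $\dot{\tau}_{n}=\sum_{ij}\dot{f}_{ij}\widetilde{w}_{ij}$, and plugging in \eqref{eq_f_ij} together with $\widetilde{W}^{T}\widetilde{W}=\sum_{ij}\widetilde{w}_{ij}^{2}$ collapses the sum to $\dot{\tau}_{n}=-\alpha_{2}\mathrm{sgn}(\eta)$.

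Finally, assembling the pieces, $\dot{V}=\eta\bigl(\ddot{\hat{d}}_{SL}-\ddot{\hat{d}}_{BN}\bigr)-\alpha_{2}|\eta|$, and Assumption \ref{ass_boundedBNSL} yields the bound $\dot{V}\le -(\alpha_{2}-\ddot{\hat{d}}^{*}_{BN}-\ddot{\hat{d}}^{*}_{SL})|\eta|$; the hypothesis $\alpha_{2}>\ddot{\hat{d}}^{*}_{BN}+\ddot{\hat{d}}^{*}_{SL}$ then gives a strict $\eta$-reaching condition $\dot{V}\le -\epsilon|\eta|$ with $\epsilon>0$, which by the usual comparison-lemma argument forces $\tau_{c}=\eta\to 0$ in finite time. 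The main obstacle is unquestionably the $\dot{\widetilde{w}}_{ij}=0$ identity: it is algebraically delicate and explains why the $\sigma$-updates contain the unusual additive term $\sigma^{3}/(\xi-c)^{2}$. Everything before and after that identity is routine SMC-reaching manipulation.
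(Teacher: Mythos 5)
Your proposal is correct and follows essentially the same path as the paper: the Lyapunov candidate $V=\tfrac{1}{2}\tau_{c}^{2}$, the identity $\dot{\tau}_{c}=\dot{\tau}_{n}-\ddot{\hat{d}}_{BN}+\ddot{\hat{d}}_{SL}$, the reduction of $\dot{\tau}_{n}$ to $-\alpha_{2}\mathrm{sgn}(\eta)$ via $\dot{\widetilde{w}}_{ij}=0$ (which the paper's appendix establishes by the same cancellation, showing $N_{1i}\dot{N}_{1i}=N_{2j}\dot{N}_{2j}=\alpha_{1}\mathrm{sgn}(\eta)$ so that every $w_{ij}$ has the identical logarithmic derivative $-4\alpha_{1}\mathrm{sgn}(\eta)$), and the final bound $\dot{V}<|\tau_{c}|(-\alpha_{2}+\ddot{\hat{d}}^{*}_{BN}+\ddot{\hat{d}}^{*}_{SL})$. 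Your logarithmic-differentiation framing is a slightly cleaner presentation of the identical algebra, and your observation that the reaching condition actually gives finite-time convergence is, if anything, a sharper conclusion than the paper states.
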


\begin{proof}
The Lyapunov function candidate is formulated below:
\begin{equation}\label{eq_sldo_Lyapunov}
V = \frac{1}{2} \tau_{c}^{2} 
\end{equation}
The time-derivative of the Lyapunov function in\eqref{eq_sldo_Lyapunov} is taken as follows: 
\begin{equation}\label{eq_sldo_Lyapunov_dot} 
\dot{V} =   \tau_{c} \dot{\tau}_{c} 
\end{equation}
If the time-derivative of \eqref{eq_sldo_estimationlaw} is inserted into \eqref{eq_sldo_Lyapunov_dot}
\begin{equation}\label{eq_sldo_Lyapunov_dot2} 
\dot{V} =   \tau_{c} (\dot{\tau}_{n} - \ddot{\hat{d}}_{BN}  + \ddot{\hat{d}}_{SL} ) 
\end{equation}
Then, the calculation of $\dot{\tau}_{n}$ in \eqref{eq_taun_dot_3} (see Appendix) is inserted into \eqref{eq_sldo_Lyapunov_dot2} by taking \eqref{eq_sldo_slidingfsurface} into account, 
\begin{equation}\label{eq_sldo_Lyapunov_dot4}
\dot{V} =   \tau_{c} \Big( -\alpha_{2}  \textrm{sgn}\left( \tau_{c}\right)  -\ddot{\hat{d}}_{BN}  + \ddot{\hat{d}}_{SL} \Big) 
\end{equation}
As stated in Assumption \ref{ass_boundedBNSL}, $\ddot{\hat{d}}_{SL}$ and $\ddot{\hat{d}}_{BN} $  are respectively bounded by $\ddot{\hat{d}}^{*}_{SL}$ and $\ddot{\hat{d}}^{*}_{BN}$; therefore,  \eqref{eq_sldo_Lyapunov_dot4} is re-written:
\begin{equation}\label{eq_sldo_Lyapunov_dot5}
\dot{V} <  \mid \tau_{c} \mid ( - \alpha_{2} + \ddot{\hat{d}}^{*}_{BN}   + \ddot{\hat{d}}^{*}_{SL} ) 
\end{equation}
As stated in Theorem \ref{theorem1}, if the learning rate $\alpha_{2}$ is large enough, i.e., $\alpha_{2}>  \ddot{\hat{d}}^{*}_{BN}  + \ddot{\hat{d}}^{*} _{SL} $, then the time-derivative of the Lyapunov function is negative; therefore, the SMC theory-based learning algorithm is globally asymptotically stable and $\tau_{c}$ converges to zero.
\end{proof}

\subsection{Stability Analysis of the SLDO}\label{sec_stability_SLDO}

\begin{assumption}\label{ass_boundedddotd}
The acceleration of the actual disturbance is assumed to be bounded:
\begin{equation}
\mid \ddot{d} \mid < \ddot{d}^{*}
\end{equation}
where $\ddot{d}^{*}$ is considered as a positive constant. 
\end{assumption}

\begin{theorem}[]\label{theorem2}
The estimation law in \eqref{eq_sldo_estimationlaw} is utilized as a DO, the SLDO error dynamics are asymptotically stable if $lz$ is positive as stated in Lemma \ref{lemma_smcdo_1}, i.e., $\textbf{l}\textbf{z}>0$ and the learning rate of the NFS $\alpha_{2}$ is large enough, $\alpha_{2} > \ddot{d}^{*} $.
\end{theorem}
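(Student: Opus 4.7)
The plan is to introduce the SLDO estimation error $e_{SL}(t) = d(t) - \hat{d}_{SL}(t)$, derive an ODE for it, and then appeal to a Lyapunov argument very close in spirit to the proof of Theorem~\ref{theorem1}. Differentiating the SLDO estimation law \eqref{eq_sldo_estimationlaw} and substituting the BNDO derivative \eqref{eq_disturbanceobserver_dot_dBN} together with the definition \eqref{eq_sldo_tauc} of $\tau_{c}$ yields $\dot{e}_{SL} = \dot{d} - \textbf{l}\textbf{z}\, e_{d} - \tau_{c} + \tau_{n}$, which after adding and subtracting $\textbf{l}\textbf{z}\, e_{SL}$ can be rewritten as $\dot{e}_{SL} + \textbf{l}\textbf{z}\, e_{SL} = \dot{d} + \textbf{l}\textbf{z}(e_{SL} - e_{d}) - \tau_{c} + \tau_{n}$, isolating the stabilising $-\textbf{l}\textbf{z}\, e_{SL}$ contribution.

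The first key step is to exploit Theorem~\ref{theorem1}: under its learning-rate hypothesis the training error $\tau_{c}$ is driven into a sliding manifold and goes to zero. Together with $\textbf{l}\textbf{z} > 0$ and Lemma~\ref{lemma_smcdo_1}, this makes the BNDO disturbance error $e_{d}$ vanish exponentially, which in turn tames the coupling term $\textbf{l}\textbf{z}(e_{SL} - e_{d})$ via a cascade argument of the flavour of Lemma~\ref{lemma_smcdo_2}. The effective residual forcing on the $e_{SL}$ equation is then essentially $\dot{d} + \tau_{n}$, with $\tau_{n}$ being shaped on-line by the adaptation rules \eqref{eq_c_1i}--\eqref{eq_f_ij}.

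The second key step is the Lyapunov analysis itself. I would choose $V = \tfrac{1}{2}\, e_{SL}^{2}$, differentiate, and substitute the rewritten error dynamics. The $-\textbf{l}\textbf{z}\, e_{SL}^{2}$ piece is negative definite. To handle the residual forcing one expands $\dot{\tau}_{n}$ as in the Appendix (cf.~\eqref{eq_taun_dot_3}) so that the $\dot{f}_{ij}$ rule \eqref{eq_f_ij} contributes an $-\alpha_{2}\,\textrm{sgn}(\cdot)$ term (mirroring the mechanism that gave \eqref{eq_sldo_Lyapunov_dot5} in Theorem~\ref{theorem1}), and then invoke Assumption~\ref{ass_boundedddotd} together with $\alpha_{2} > \ddot{d}^{*}$ to dominate $|\ddot{d}|$. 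This should deliver $\dot{V} \le -\textbf{l}\textbf{z}\, e_{SL}^{2} - (\alpha_{2} - \ddot{d}^{*})\,|e_{SL}|$ asymptotically, certifying asymptotic stability of the SLDO error dynamics.

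The hard part will be making the cascade between Theorem~\ref{theorem1} and the $e_{SL}$ dynamics rigorous. Since $\tau_{c}$ converges only asymptotically, the forcing on the $e_{SL}$ equation does not vanish pointwise, so one cannot simply substitute $\tau_{c}=0$; the argument needs either a composite Lyapunov function in $(e_{SL}, \tau_{c})$ or an input-to-state stability invocation built on Lemma~\ref{lemma_smcdo_2}. Equally delicate is verifying that the sign of $\textrm{sgn}(\eta) = \textrm{sgn}(\tau_{c})$ inherited from the NFS adaptation aligns, through the estimation law \eqref{eq_sldo_estimationlaw}, with the sign of $e_{SL}$, so that the clean bound $\alpha_{2} > \ddot{d}^{*}$ is really what is required, rather than the looser bound from Theorem~\ref{theorem1}; this sign-matching is the crux of the proof.
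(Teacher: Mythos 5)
Your strategy diverges from the paper's and, as written, it contains two gaps that are not mere technicalities. First, your cascade step rests on the claim that Lemma \ref{lemma_smcdo_1} makes the BNDO error $e_{d}=d-\hat{d}_{BN}$ vanish exponentially once $\tau_{c}\to 0$. That is false in exactly the regime Theorem \ref{theorem2} targets: for a time-varying disturbance the BNDO error obeys $\dot{e}_{d}+\textbf{l}\textbf{z}\,e_{d}=\dot{d}$ (equation \eqref{eq_disturbanceobserver_error}), so it is persistently forced and biased --- this is precisely Remark \ref{remark_smcdo_2}, and it is the reason the SLDO exists at all. Consequently the coupling term $\textbf{l}\textbf{z}(e_{SL}-e_{d})$ in your rewritten error equation does not disappear by the route you propose, and the picture of an ``effective residual forcing $\dot{d}+\tau_{n}$'' is not justified.

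Second, and more structurally, the Lyapunov function $V=\tfrac{1}{2}e_{SL}^{2}$ cannot see the stabilising discontinuous term. Your own expression for $\dot{e}_{SL}$ contains $\tau_{n}$, whereas the adaptation laws only pin down its derivative, $\dot{\tau}_{n}=-\alpha_{2}\,\mathrm{sgn}(\eta)$ (equation \eqref{eq_taun_dot_3}); there is no bound on $\tau_{n}$ itself that would let $e_{SL}\dot{e}_{SL}$ acquire a $-(\alpha_{2}-\ddot{d}^{*})\,|e_{SL}|$ term, and $\mathrm{sgn}(\eta)$ has no reason to align with $\mathrm{sgn}(e_{SL})$. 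You flag this sign matching as ``the crux,'' but it is not a verification to be carried out --- it is an obstruction that forces a different choice of Lyapunov variable. The paper avoids it by taking $V=\tfrac{1}{2}\eta^{2}$ with $\eta=\tau_{c}=\ddot{\hat{d}}_{BN}+\textbf{l}\textbf{z}\dot{\hat{d}}_{BN}$: then $\dot{V}=\eta\dot{\eta}$ lives one derivative higher, $\dot{\tau}_{n}$ enters multiplied by $\eta$ so the signs match by construction, the $\dot{e}_{d}$ terms cancel identically, and Assumption \ref{ass_boundedddotd} yields $\dot{V}<|\eta|\,\textbf{l}\textbf{z}\,(-\alpha_{2}+\ddot{d}^{*})/(1+\textbf{l}\textbf{z})$ directly, with the single condition $\alpha_{2}>\ddot{d}^{*}$. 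Stability of the error dynamics then follows because $\eta=0$ is exactly the manifold $\dot{e}_{d}+\textbf{l}\textbf{z}\,e_{d}=0$. If you wish to keep your error-variable formulation, you would need a composite Lyapunov function in $(e_{SL},\eta)$, which in substance collapses back to the paper's argument.
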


\begin{proof}

The Lyapunov function candidate is proposed as below:
\begin{equation}\label{eq_observer_Lyapunov}
V= \frac{1}{2}  \eta^{2} 
\end{equation}
The time-derivative of the Lyapunov function candidate is taken as below:
\begin{equation}\label{eq_observer_Lyapunov_d} 
\dot{V} =  \eta \dot{\eta} 
\end{equation}

If  $\tau_{c}$ in \eqref{eq_sldo_tauc} is inserted into \eqref{eq_sldo_slidingfsurface}, the sliding surface is obtained:
\begin{eqnarray}\label{eq_slidingfsurface_2}
\eta \left( \dot{\hat{d}}_{BN}, \ddot{\hat{d}}_{BN} \right)  =\ddot{\hat{d}}_{BN} + \textbf{l} \textbf{z} \dot{\hat{d}}_{BN}  
\end{eqnarray}
where $\textbf{l} \textbf{z}$ is the slope of the sliding surface. 
The time-derivative of the sliding surface is taken as follows:
\begin{equation}\label{eq_slidingfsurface_dot}
\dot{\eta} = \dddot{\hat{d}}_{BN} + \textbf{l} \textbf{z} \ddot{\hat{d}}_{BN}  
\end{equation}

If \eqref{eq_slidingfsurface_dot} is inserted into \eqref{eq_observer_Lyapunov_d}, the time-derivative of the Lyapunov function candidate is re-written as below:  
\begin{eqnarray}\label{eq_observer_Lyapunov_dd} 
 \dot{V} &=&  \eta ( \dddot{\hat{d}}_{BN} + \textbf{l} \textbf{z} \ddot{\hat{d}}_{BN}  )
\end{eqnarray}
The time-derivative of the Lyapunov function candidate is re-written by taking \eqref{eq_disturbanceobserver_dot_dBN} into account
\begin{eqnarray}\label{eq_observer_Lyapunov_d1} 
 \dot{V} &=&  \eta \Big( \textbf{l} \textbf{z} \ddot{e}_{d} + (\textbf{l} \textbf{z})^{2} \dot{e}_{d}  \Big)  
\end{eqnarray}

The SLDO law in \eqref{eq_sldo_estimationlaw} is re-written considering \eqref{eq_sldo_tauc}:
\begin{eqnarray}\label{eq_obs_estimation_law_self_stability}
\dot{\hat{d}}_{SL} = (1 + \textbf{l} \textbf{z}) \dot{\hat{d}}_{BN}  + \ddot{\hat{d}}_{BN}  - \tau_{n}
\end{eqnarray}
The SLDO error dynamics are calculated by adding the actual disturbance rate $\dot{d}$ into in \eqref{eq_obs_estimation_law_self_stability} and by taking \eqref{eq_disturbanceobserver_dot_dBN} into account:
\begin{eqnarray}\label{eq_obs_error_self_1}
 \dot{d} - \dot{\hat{d}}_{SL}  &=&  -(1 +  \textbf{l} \textbf{z}) \dot{\hat{d}}_{BN}  - \ddot{\hat{d}}_{BN}  + \tau_{n}+ \dot{d} \nonumber \\
\dot{e}_{d}  &=&  \frac{  - \textbf{l} \textbf{z} (1 + \textbf{l} \textbf{z}) e_{d}  + \tau_{n} + \dot{d} } { 1+ \textbf{l}  \textbf{z} }
\end{eqnarray}
The time-derivative of \eqref{eq_obs_error_self_1} is taken and obtained as below:
\begin{equation}\label{eq_obs_error_self_2}
\ddot{e}_{d}  =   \frac{  - \textbf{l} \textbf{z} (1 + \textbf{l} \textbf{z}) \dot{e}_{d}  + \dot{\tau}_{n} + \ddot{d} } { 1+ \textbf{l} \textbf{z} }
\end{equation}
$\dot{\tau}_{n}=- \alpha_{2} sgn(\eta)$ in \eqref{eq_taun_dot_3} is inserted into \eqref{eq_obs_error_self_2};
\begin{equation}\label{eq_obs_error_self_3}
\ddot{e}_{d}  =   \frac{ -\textbf{l} \textbf{z}(1 + \textbf{l} \textbf{z}) \dot{e}_{d}  - \alpha_{2} sgn(\eta) + \ddot{d}} {1 + \textbf{l} \textbf{z}}
\end{equation}

If the second time-derivative of the disturbance estimation error \eqref{eq_obs_error_self_3} is inserted into \eqref{eq_observer_Lyapunov_d1}:
\begin{eqnarray}\label{eq_observer_Lyapunov_d2}
\dot{V} & = & \eta \textbf{l} \textbf{z} \Big( \frac{ -\textbf{l} \textbf{z}(1 + \textbf{l}\textbf{z}) \dot{e}_{d}  - \alpha_{2} sgn(\eta) + \ddot{d}} {1 + \textbf{l} \textbf{z}} +  \textbf{l}  \textbf{z} \dot{e}_{d} \Big)
\end{eqnarray}
As stated in Assumption \ref{ass_boundedddotd}, $\ddot{d}$ is upper bounded by $\ddot{d}^{*}$:
\begin{eqnarray}\label{eq_observer_Lyapunov_d3}
 \dot{V}  &<&  \mid \eta \mid  \textbf{l} \textbf{z} \frac{(-  \alpha_{2} + \ddot{d}^{*})}{1 + \textbf{l} \textbf{z}} +\eta \dot{e}_{d}  \textbf{l} \textbf{z}  \underbrace{  \Big( - \textbf{l} \textbf{z} \frac{ (  1 + \textbf{l} \textbf{z}) }{1 + \textbf{l} \textbf{z}} + \textbf{l} \textbf{z} \Big) }_{0} \nonumber \\
 \dot{V}  &<&  \mid \eta \mid  \textbf{l} \textbf{z} \frac{(-  \alpha_{2} + \ddot{d}^{*})}{1 + \textbf{l} \textbf{z}}  
\end{eqnarray}

As stated in Theorem \ref{theorem2}, if $\textbf{l} \textbf{z}>0$ and $\alpha_{2} >\ddot{d}^{*}$, then the time-derivative of the Lyapunov function is negative, i.e., $\dot{V}<0$, so that the sliding surface $\eta$ converges to zero in finite time and the SLDO is globally asymptotically  stable.

\end{proof}

\begin{remark}\label{remark_sldo_timevarying}
The main advantage of the SLDO is to ensure the stability in the presence of time-varying disturbances.
\end{remark}

\begin{remark}\label{remark_sldo_timevaryingd}
The time-derivative of the disturbance is assumed to bounded with a finite changing rate in Assumption \ref{ass_boundedddotd}. Therefore, if the time-derivative of the disturbance is not oscillating significantly in practice, the SLDO algorithm is a convenient observer since we do not need to select large learning rate $\alpha_{2}$. 
\end{remark}

\section{Novel SMC-Based on an SLDO for Mismatched Time-Varying Uncertainties}\label{sec_smcsldo}

The control structure of the novel SMC-SLDO is illustrated in Fig. \ref{fig_sldo_diagram}. In this structure, the mismatched uncertainty is estimated by the SLDO, and the estimated disturbance is fed to the SMC controller. The SMC controller based on the system model and disturbance estimate generates a control signal for the system.

The sliding surface as proposed in \eqref{eq_smcdo_slidingsurface} is written as follows:
\begin{equation}\label{eq_smcsldo_slidingsurface}
s = x_{2} + \lambda x_{1} + \hat{d}_{SL} 
\end{equation}

The equivalent control law is derived by equaling the time-derivative of the sliding surface above to zero as follows:
\begin{equation}\label{eq_smcsldo_equivalentcontrollaw}
u_{eq} = -b^{-1} (\textbf{x}) \Big( a(\textbf{x}) + \lambda (x_{2} + \hat{d}_{SL}) + \dot{\hat{d}}_{SL} \Big)
\end{equation}

A novel control law by taking the estimated disturbance, estimated disturbance rate and discontinuous term into account is obtained as follows:
\begin{equation}\label{eq_smcsldo_controllaw}
u = -b^{-1} (\textbf{x}) \Big( a(\textbf{x}) + \lambda (x_{2} + \hat{d}_{SL}) + \dot{\hat{d}}_{SL}+ k \textrm{sgn}(s) \Big)
\end{equation}

Then, the time-derivative of the sliding surface is calculated by taking \eqref{eq_nonlinearsystem}, \eqref{eq_smcsldo_slidingsurface} and \eqref{eq_smcsldo_controllaw} into account as follows:
\begin{equation}\label{eq_smcsldo_slidingsurfacedot}
\dot{s} = -k \textrm{sgn}(s) + \lambda e_{d} 
\end{equation}
where $e_{d} = d(t) - \hat{d}_{SL}$ is the disturbance estimation error of the SLDO.

\begin{assumption}\label{ass_bounded_ed}
The disturbance error is assumed to be bounded:
\begin{equation}
\mid e_{d} \mid < e_{d}^{*}
\end{equation}
where $ e_{d}^{*}$ is considered as a positive constant. 
\end{assumption}

\begin{theorem}[Stability of overall system]\label{theorem3}
If the SLDO is employed to estimate the disturbance in \eqref{eq_nonlinearsystem}, the control law in \eqref{eq_smcsldo_controllaw} is applied to the nonlinear system in \eqref{eq_nonlinearsystem}, and the controller coefficient $k$ is large enough, i.e., $k > \lambda e_{d}^{*}$, then the overall system is  asymptotically stable. 
\end{theorem}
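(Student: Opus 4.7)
The plan is to mirror the two-stage argument already used for the SMC-BNDO (equations \eqref{eq_smcdo_system_1}--\eqref{eq_smcdo_system_2} and Lemma \ref{lemma_smcdo_2}), but now with the SLDO replacing the BNDO. First I would establish finite-time convergence of $s$ to zero via a Lyapunov function $V = \tfrac{1}{2} s^{2}$. Differentiating along \eqref{eq_smcsldo_slidingsurfacedot} gives
\begin{equation*}
\dot{V} = s\dot{s} = s\bigl(-k\,\mathrm{sgn}(s) + \lambda e_{d}\bigr) \le -k|s| + \lambda |e_{d}|\,|s| \le -(k-\lambda e_{d}^{*})|s|,
\end{equation*}
using Assumption \ref{ass_bounded_ed}. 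With $k>\lambda e_{d}^{*}$, this is a standard reaching-law bound, so $s\to 0$ in finite time $t_{r}$.

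Next, for $t\ge t_{r}$ the system is in sliding mode, so $s=0$ in \eqref{eq_smcsldo_slidingsurface} yields $x_{2} = -\lambda x_{1} - \hat{d}_{SL}$. Substituting into the first row of \eqref{eq_nonlinearsystem} (which reads $\dot{x}_{1}=x_{2}+d(t)$) produces the reduced-order dynamics
\begin{equation*}
\dot{x}_{1} + \lambda x_{1} = d(t) - \hat{d}_{SL} = e_{d},
\end{equation*}
exactly analogous to \eqref{eq_smcdo_slidingsurfaceerror} but with the SLDO error. Since $\lambda>0$, this linear scalar system is input-to-state stable with input $e_{d}$.

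I would then invoke Theorem \ref{theorem2}, which guarantees that the SLDO is globally asymptotically stable under $\textbf{l}\textbf{z}>0$ and $\alpha_{2}>\ddot{d}^{*}$, so that $\displaystyle\lim_{t\to\infty} e_{d}(t) = 0$. Combining input-to-state stability of the reduced dynamics with a vanishing input, Lemma \ref{lemma_smcdo_2} then delivers $\displaystyle\lim_{t\to\infty} x_{1}(t)=0$, and hence $x_{2} = -\lambda x_{1}-\hat{d}_{SL}\to 0$ since $\hat{d}_{SL}\to d$ and, on the sliding manifold together with $x_{1}\to 0$, the state converges to zero (modulo the bounded disturbance being absorbed by the observer). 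This establishes asymptotic stability of the overall closed loop.

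The main obstacle I anticipate is the cascade argument itself: the SLDO dynamics (Theorem \ref{theorem2}) were proven assuming boundedness of $\hat{d}_{BN}$ and its derivatives (Assumption \ref{ass_boundedBNSL}), which in turn requires that $\textbf{x}$ stay bounded during the reaching phase $t<t_{r}$. I would therefore be careful to note that the reaching-phase bound on $|s|$, together with Assumption \ref{ass_bounded_ed} and boundedness of the nominal control law, keeps $\textbf{x}$ bounded on $[0,t_{r}]$, so that the SLDO assumptions remain valid throughout and the two stability results can be chained without circularity. The rest is routine.
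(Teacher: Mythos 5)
Your proposal is correct and follows essentially the same route as the paper: the Lyapunov function $V=\tfrac{1}{2}s^{2}$ with the bound $\dot{V}<|s|(-k+\lambda e_{d}^{*})$ from Assumption \ref{ass_bounded_ed} to get finite-time reaching, then the sliding-mode dynamics $\dot{x}_{1}+\lambda x_{1}=e_{d}$ combined with Theorem \ref{theorem2} and Lemma \ref{lemma_smcdo_2} to conclude $x_{1}\to 0$. Your added remark about avoiding circularity between the reaching-phase boundedness of $\textbf{x}$ and Assumption \ref{ass_boundedBNSL} is a point the paper leaves implicit, but it does not change the argument.
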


\begin{proof}
The Lyapunov function candidate is formulated as below:
\begin{eqnarray}\label{eq_smcsldo_Lyapunov}
V = \frac{1}{2} s^{2} 
\end{eqnarray}
The time-derivative of the Lyapunov function in \eqref{eq_smcsldo_Lyapunov} is taken as below:
\begin{eqnarray}\label{eq_smcsldo_Lyapunov_dot} 
\dot{V} =   s \dot{s}
\end{eqnarray}
Then, \eqref{eq_smcsldo_slidingsurfacedot} is inserted into \eqref{eq_smcsldo_Lyapunov_dot}
\begin{eqnarray}\label{eq_smcsldo_Lyapunov_dot2} 
\dot{V} =   s ( -k \textrm{sgn}(s) + \lambda e_{d})
\end{eqnarray}
As stated in Assumption \ref{ass_bounded_ed}, $e_{d}$ is bounded by $e_{d}^{*}$, \eqref{eq_smcsldo_Lyapunov_dot2} is calculated as below:
\begin{eqnarray}\label{eq_smcsldo_Lyapunov_dot3}
\dot{V} <   \mid s \mid  ( -k + \lambda e_{d}^{*} ) 
\end{eqnarray}
As remarked in Theorem \ref{theorem2}, the controller coefficient $k$ is large enough, i.e., $k > \lambda e_{d}^{*}$, then the time-derivative of the Lyapunov function is negative, i.e., $\dot{V}<0$; therefore, overall system is globally asymptotically stable and the sliding surface $s$ will converge to zero in finite time.

If the sliding surface converges to zero, i.e., $s=0$, this condition results in
\begin{equation}\label{eq_smcsldo_slidingsurfaceerror}
 \dot{x}_{1} + \lambda x_{1} = e_{d}
\end{equation}

The SLDO error dynamics go to zero as remarked in Theorem \ref{theorem2}. Therefore, Lemma \ref{lemma_smcdo_2} purports that the states can be driven to the desired state trajectory, i.e., $\displaystyle \lim_{ t \to \infty} e_{d}(t) = 0$, and $\displaystyle \lim_{ t \to \infty} x_{1}(t) = 0$.
\end{proof}

\begin{remark}\label{remark_smcsldo}
If there does not exist any disturbance, i.e.,$d(t)=\dot{d}(t)=0$, then the disturbance and disturbance rate  estimates in the sliding surface \eqref{eq_smcsldo_slidingsurface} and control law \eqref{eq_smcsldo_controllaw} are equal to zero, i.e., $\hat{d}_{SL}=\dot{\hat{d}}_{SL}=0$. This results in SMC; therefore, it can remain the nominal control performance.
\end{remark}

The implementation of the proposed method is summarized step by step here:
\begin{enumerate}
\item Design a sliding surface considering where the mismatched disturbance appears. 
\item Derive a control law that makes the time-derivative of the sliding surface zero, i.e., $\dot{s}=0$. 
\item Design a BNDO considering where the mismatched disturbance appears. 
\item Design a conventional estimation law considering the observer gain of the BNDO and the coefficient of the disturbance vector in the system model. 
\end{enumerate}

\section{Simulation Studies}\label{sec_simulation}

Motivated by \cite{Yang2013,Li2012generalized}, the following system is considered for simulation studies to benchmark control strategies:
\begin{eqnarray}\label{eq_cart}
\dot{x}_{1} & = & x_{2} + d(t) \nonumber \\
\dot{x}_{2} & = & - x_{1} - x_{2} + (x_{2})^{2} \cos{(x_{1})} + e^{x_1} + u
\end{eqnarray}
where $x_{1}$ and $x_{2}$ represent the states, $u$ is the input, $d$ is the disturbance, $a(\textbf{x})=- x_{1} - x_{2} + (x_{2})^{2} \cos{(x_{1})} + e^{x_1}$, $b(\textbf{x})=1$ and $\textbf{z}=[1, 0]^{T}$. The sampling time and the number of membership functions are set to $0.001$ second and $I = J = 3$, respectively.

\subsection{Scenario 1: General Comparison}\label{sec_scenario1}

The initial conditions on the states are set to $\textbf{x}(0)=[0.5, -0.5]^{T}$. To evaluate control performances of controllers in the absence and presence of the mismatched disturbances, disturbance is not imposed on the system at $t=0-10$ second at the beginning of the simulation. Then, a step external disturbance $d(t) = 0.3$ is imposed at $t =10$ second as a time-invariant disturbance. At the last step, a multisine disturbance $d(t)=0.15 \times \big(\sin{(t)} + \sin{(2t)}  \big)$ is imposed at $t=20$ second as a time-varying disturbance.

As remarked in Sections \ref{sec_smc} and \ref{sec_ismc}, the controller gain $k$ must be respectively larger than $\lambda d$ and $2 \lambda d$ for SMC and ISMC methods. Therefore, the controller gain must be selected considering the maximum disturbance value and the slope of the sliding surface.  Since the maximum disturbance value is equal to $0.6$, the controller gain $k$ must be larger than $0.6 \lambda$ i.e., $k > 0.6 \lambda $. Consequently, the controller gain and the slope of the sliding surface of SMCs are respectively set to $k=6.5$ and $\lambda=5$. 

As stated in Theorem  \ref{theorem2}, the learning rate of the SLDO $\alpha_{2}$ must be larger than the upper bound of the second derivative of the actual disturbance, i.e., $\alpha_{2} > \ddot{d}^{*}$, to guarantee the stability of the SLDO. Therefore, the learning rate $\alpha_{2}$ must be larger than the maximum value of the second derivative of the actual disturbance, i.e., $\alpha_{2} > 0.6$. Moreover, the neuro-fuzzy algorithm is more sensitive to the changes in the antecedent part than the changes in the consequent part so that $\alpha_{1}$ must be selected smaller than $\alpha_{2}$. Consequently, the observer gain vector of the BNDO is set to $l=[5, 0]^{T}$, and the learning rates $\alpha_{1}$, $\alpha_{2}$ of the SLDO are respectively set to $0.01$ and $1$. 

The states responses $x_{1}, x_{2}$ are shown in Figs. \ref{fig_x1}-\ref{fig_x2}, respectively. The SMC gives robust control performance in the absence of mismatched disturbance while it is delicate to mismatched disturbance as remarked in Remark \ref{remark_tsmc}. The ISMC  and SMC-BNDO provide robust control performance in the absence of mismatched disturbance and in the presence of mismatched time-invariant disturbance while they are sensitive to mismatched time-varying disturbance. However, ISMC causes large overshoot, and settling and rise times as stated in Remark \ref{remark_ismc} while SMC-BNDO remains the nominal control performance of SMC as noted in Remark \ref{remark_smcdo_1} in the absence of mismatched disturbance. The reason why SMC-BNDO is not robust against mismatched time-varying disturbance is that the BNDO fails to estimate time-varying disturbances as remarked in Remark \ref{remark_smcdo_2}. The SMC-SLDO gives robust control performance in the absence and presence of mismatched disturbances.  Moreover, it remains the nominal control performance of the SMC in the absence of disturbance as remarked in Remark \ref{remark_smcsldo}.  

The control signals for SMC, ISMC, SMC-BNDO and SMC-SLDO are shown in Fig. \ref{fig_control_signals} and they have severe high-frequency oscillations, i.e., chattering effects. The actual and estimated disturbances are shown in Fig. \ref{fig_d}. The BNDO fails to estimate  time-varying disturbance while the SLDO succeed to estimate time-varying disturbance as stated in Remark \ref{remark_sldo_timevarying}. Through the learning process by the NFS, the NFS can take the responsibility of the total estimation signal while the conventional estimation signal $\tau_{c}$ approaches to zero as shown in Fig. \ref{fig_estimation_signals}. The phase portraits for the SMC-SLDO are shown in Fig. \ref{fig_x1x1dot}. The reaching of the sliding surface and the sliding mode subsequently can be seen.

\begin{figure}[h!]
\centering
\subfigure[ ]{
\includegraphics[width=0.99\textwidth]{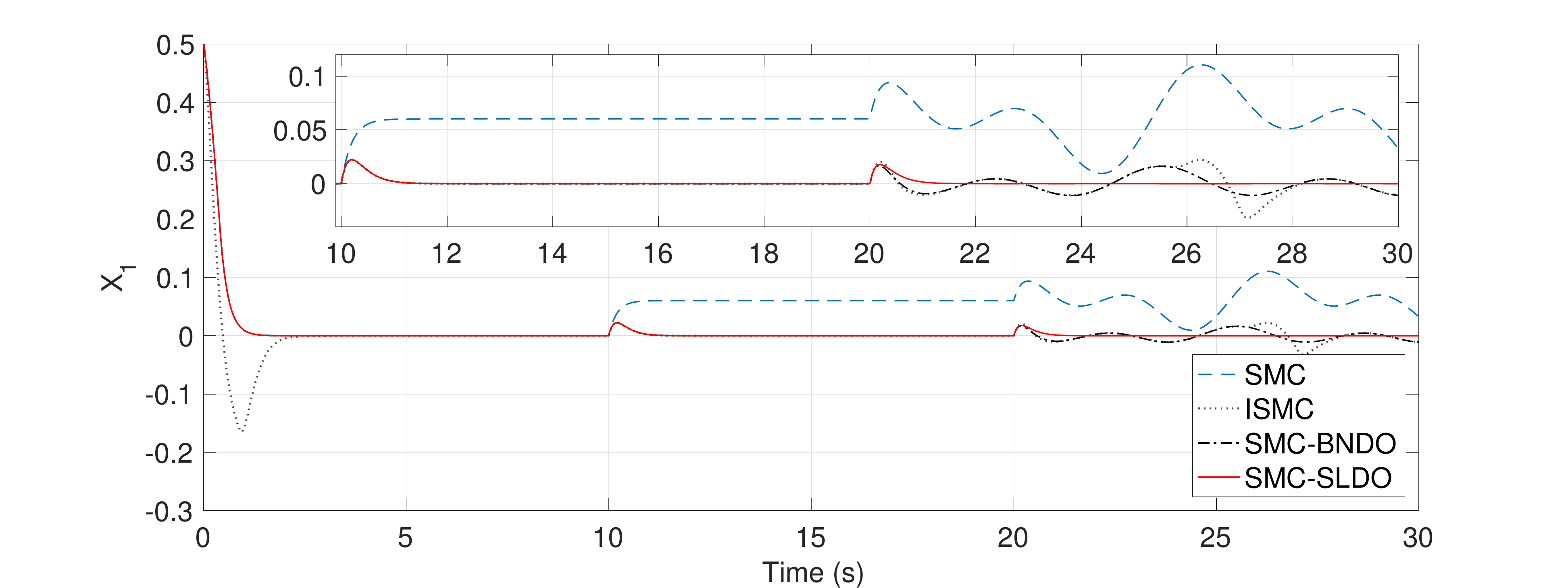}
\label{fig_x1}
}
\subfigure[ ]{
\includegraphics[width=0.99\textwidth]{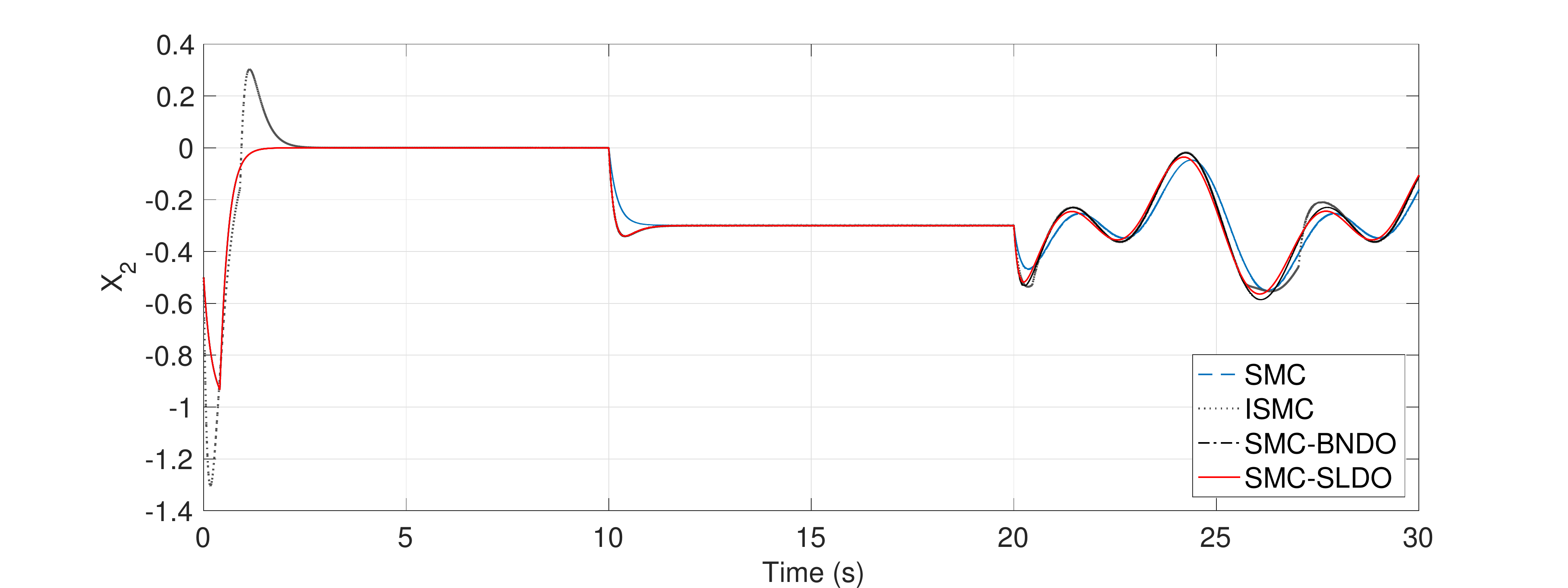}
\label{fig_x2}
}
\subfigure[ ]{
\includegraphics[width=0.99\textwidth]{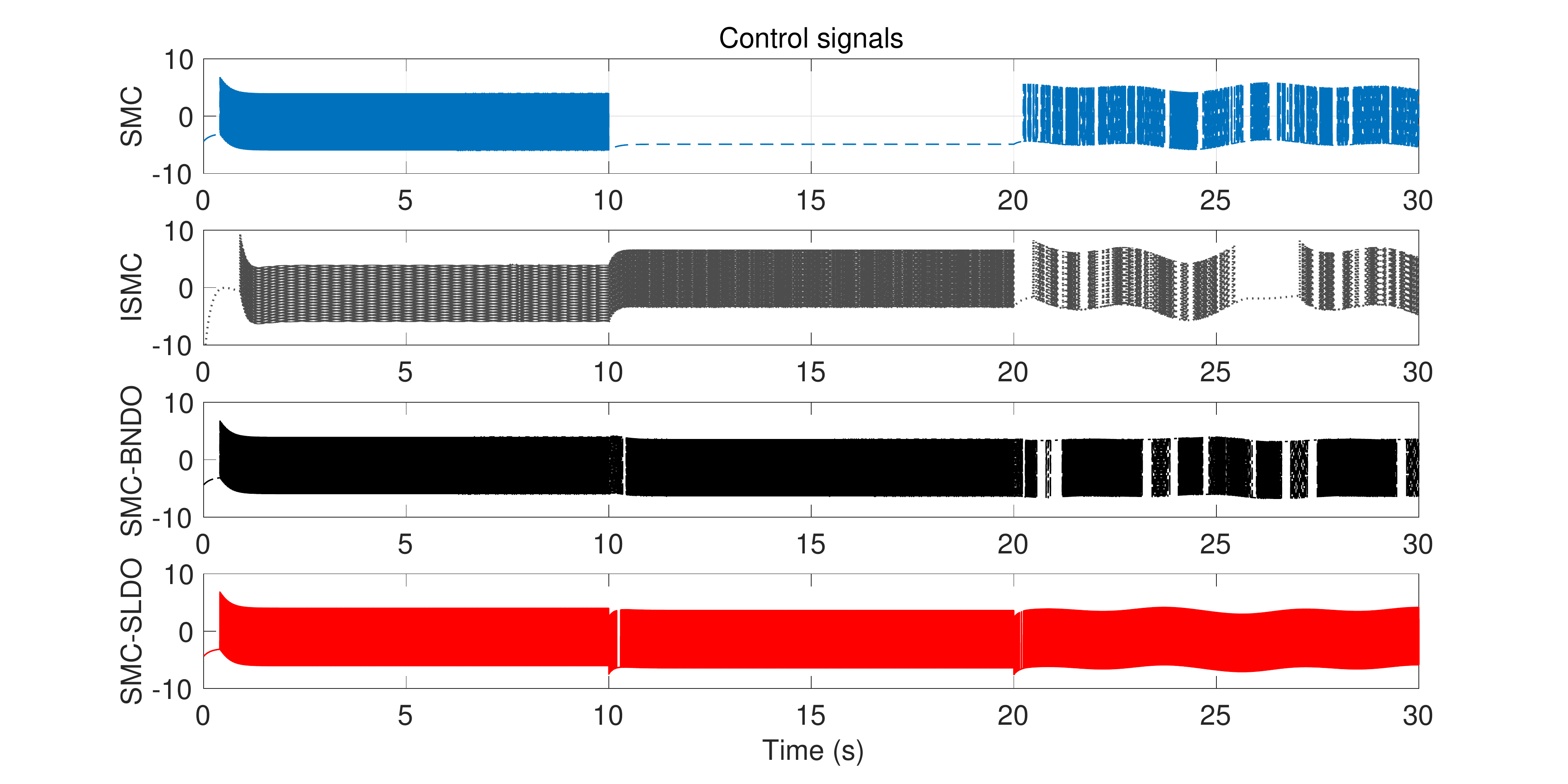}
\label{fig_control_signals}
}
\caption[Optional caption for list of figures]{ Scenario 1: General comparison (a) Responses of state $x_{1}$ (b) Responses of state $x_{2}$ (c) Control signals  }
\label{fig_1}
\end{figure}

\begin{figure}[h!]
\centering
\subfigure[ ]{
\includegraphics[width=0.99\textwidth]{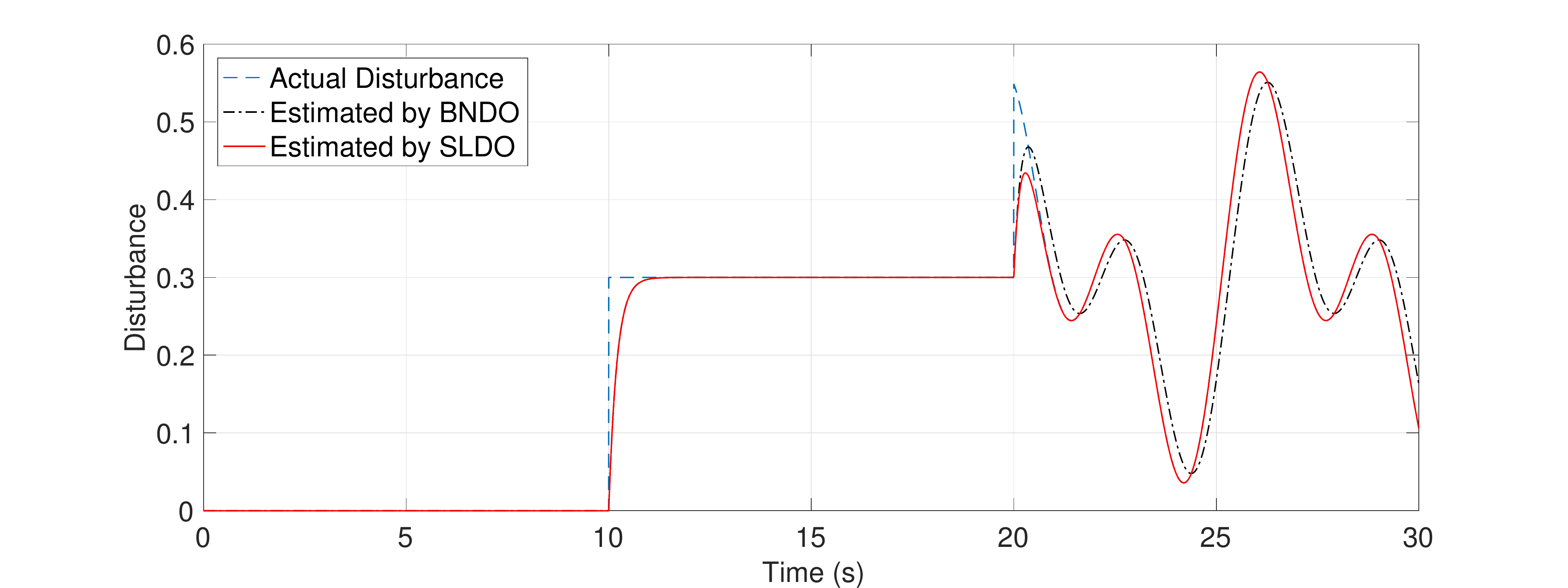}
\label{fig_d}
}
\subfigure[ ]{
\includegraphics[width=0.99\textwidth]{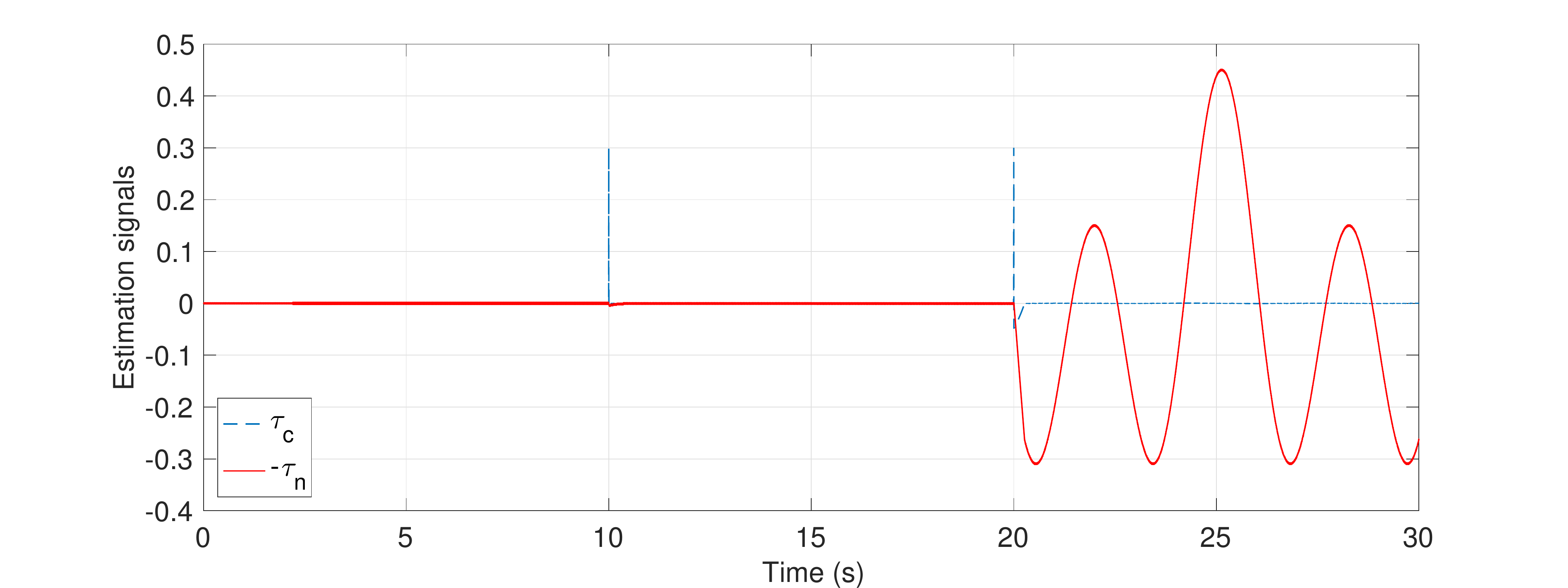}
\label{fig_estimation_signals}
}
\subfigure[ ]{
\includegraphics[width=0.99\textwidth]{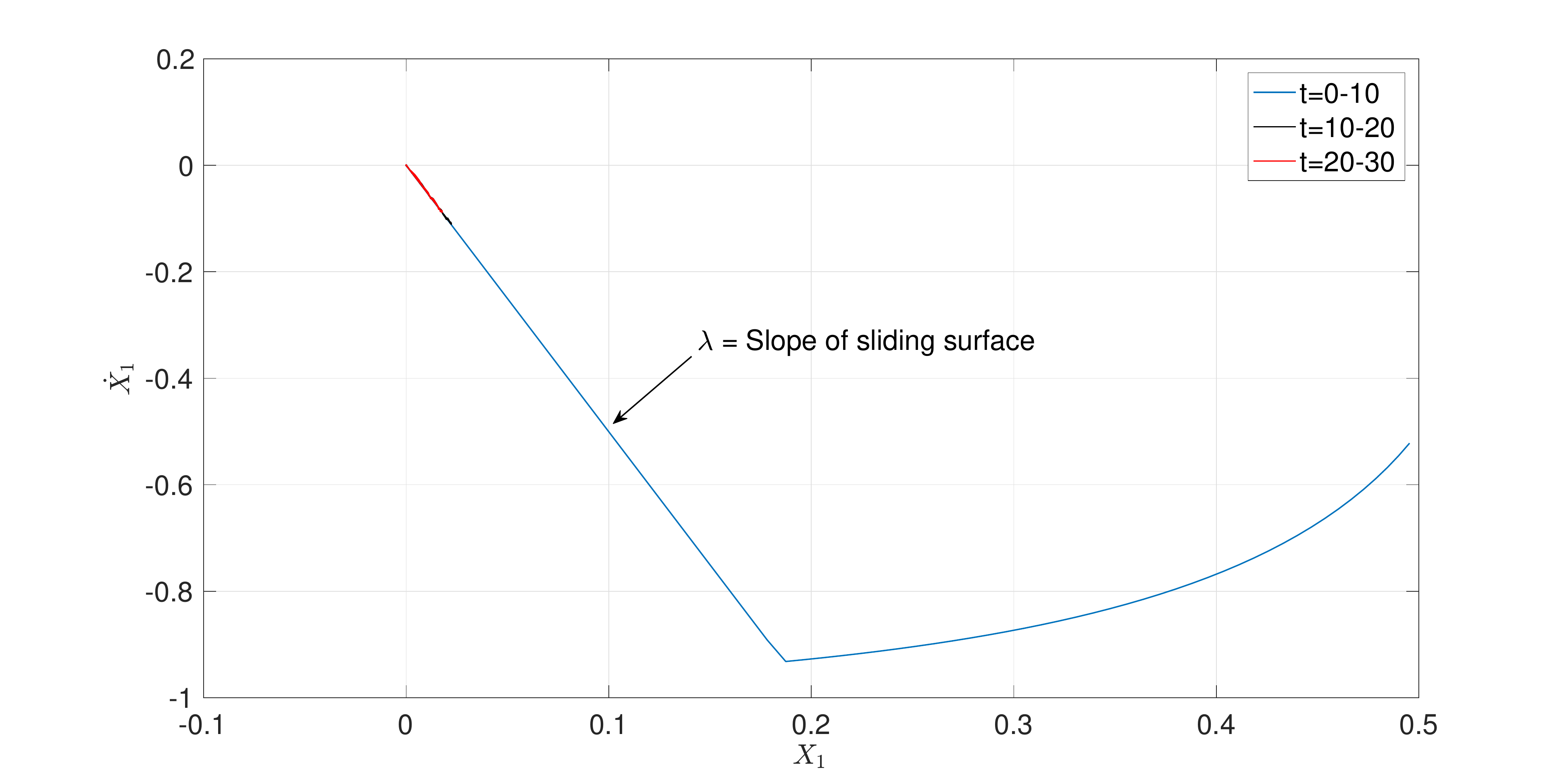}
\label{fig_x1x1dot}
}
\caption[Optional caption for list of figures]{ Scenario 1: General comparison (a) True and estimated disturbances (b) Estimation signals (c) Phase portrait for the SMC-SLDO }
\label{fig_2}
\end{figure}

\subsection{Scenario 2: Eliminating Chattering Effect}

\begin{figure}[h!]
\centering
\subfigure[ ]{
\includegraphics[width=0.99\textwidth]{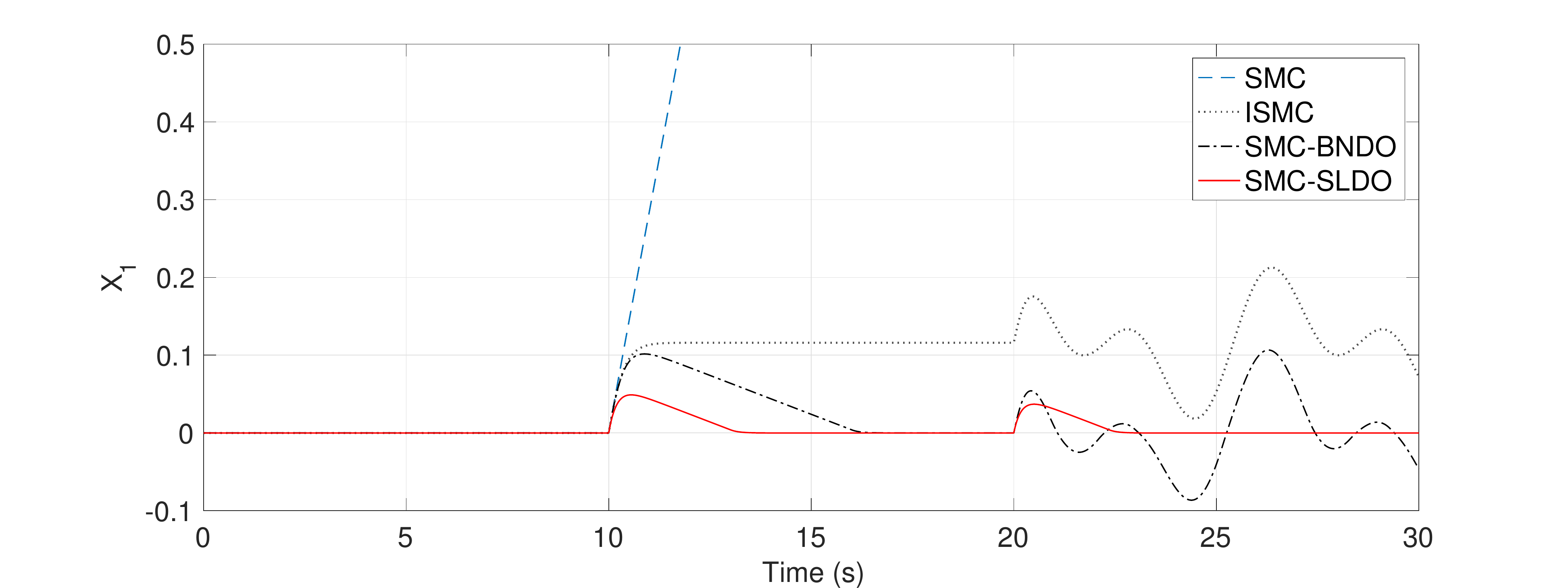}
\label{fig_2nd_x1}
}
\subfigure[ ]{
\includegraphics[width=0.99\textwidth]{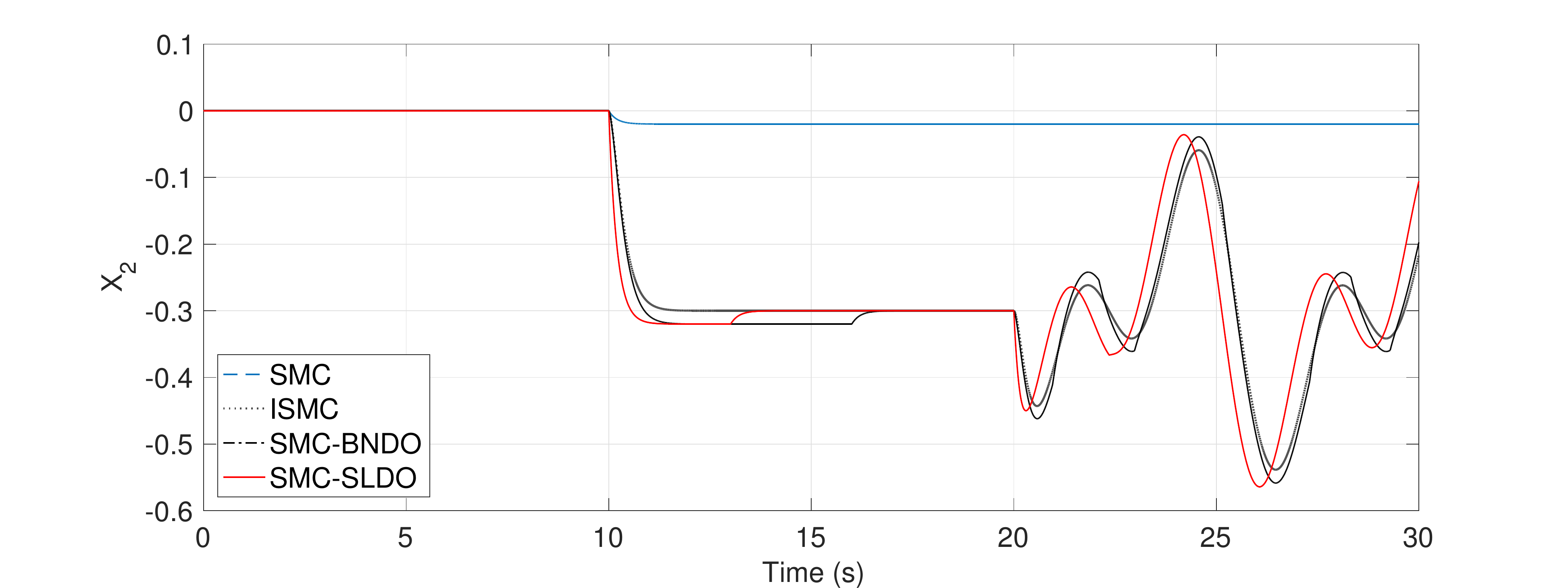}
\label{fig_2nd_x2}
}
\subfigure[ ]{
\includegraphics[width=0.99\textwidth]{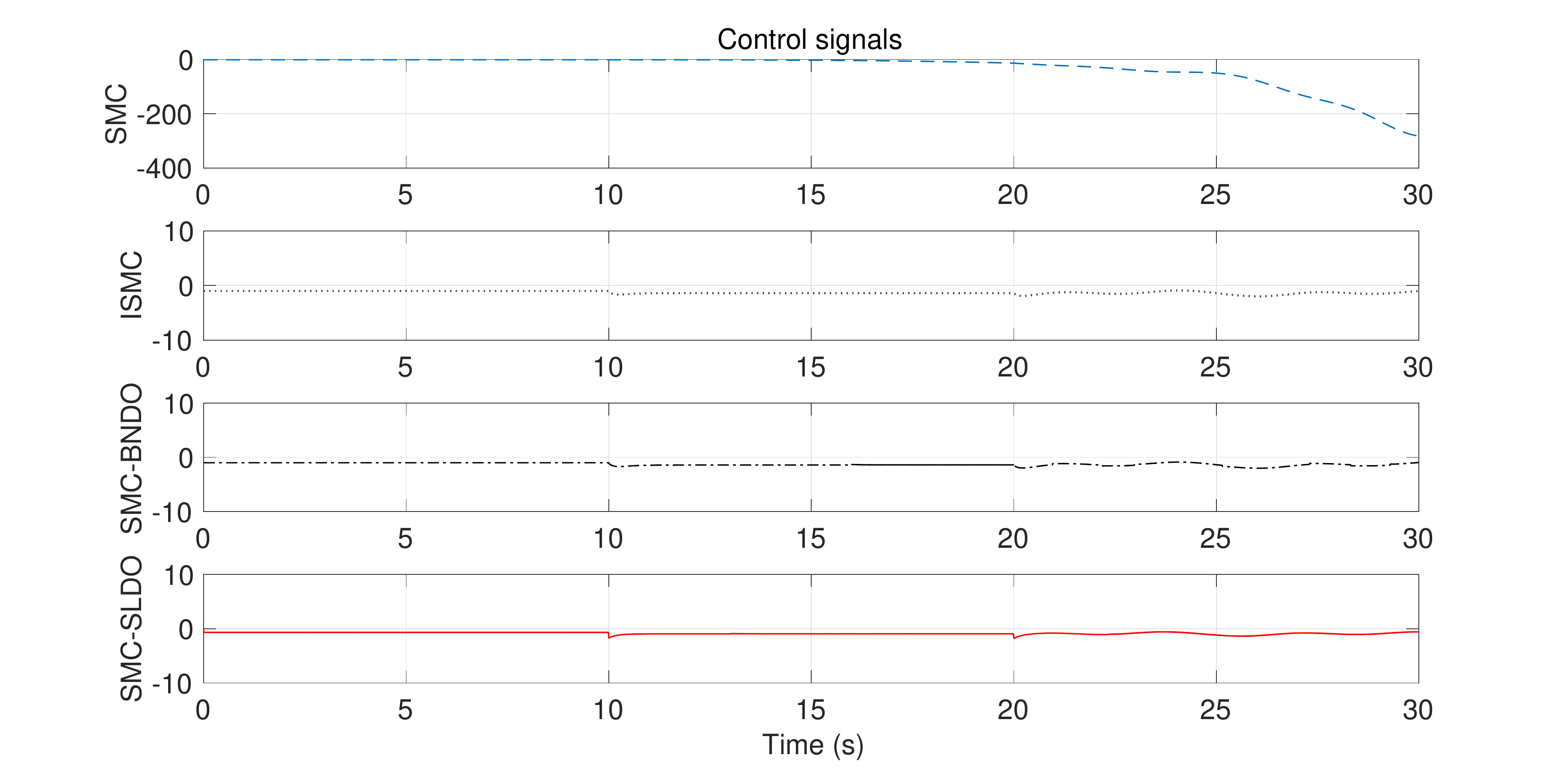}
\label{fig_2nd_control_signals}
}
\caption[Optional caption for list of figures]{ Scenario 2: Eliminating chattering effect: (a) Responses of state $x_{1}$ (b) Responses of state $x_{2}$ (c) Control signals }
\label{fig_3}
\end{figure}

To guarantee the stability of the system, the controller gains of the SMC, ISMC, SMC-BNDO and SMC-SLDO must be respectively set in a way that $k>\lambda d$, $k>2 \lambda d $,  $k> (\lambda+lz) e_{d}$ and  $k>\lambda e_{d}$ as stated respectively in Sections \ref{sec_smc},\ref{sec_ismc}, \ref{sec_smcbndo} and \ref{sec_smcsldo}. Since the controller gains of the SMC-BNDO and SMC-SLDO are required to be selected larger than the upper bound of the disturbance estimation error rather than the upper bound of the disturbance, and the disturbance estimation errors tend to converge to zero, the controller gains for the SMC-BNDO and SMC-SLDO can be selected smaller than the ones for the SMC and ISMC. This results in reducing the chattering effects \cite{Lu2009,xYang2013}. Therefore, the controller gains are set to $k=0.1$. 

The initial conditions on the states are set to $x(0)=[0 , 0]^{T}$. All other design parameters are same as the previous simulation scenario in Section \ref{sec_scenario1}.  As can be seen from Figs. \ref{fig_2nd_x1}-\ref{fig_2nd_x2}, since the controller gains of the SMC and ISMC are not larger than $\lambda d$, these controllers fail to remove mismatched uncertainty effects on the system. The SMC-BNDO and SMC-SLDO can drive the states to the desired state trajectory while chattering effects on their control signals are not observed as shown in Fig. \ref{fig_2nd_control_signals}. Moreover, the SMC-SLDO gives less settling time and overshoot than the SMC-BNDO in transient response in the presence of mismatched time-invariant disturbance. The reason is that since SMC-BNDO requires a larger controller gain than the SMC-SLDO to guarantee the stability as mentioned in the previous paragraph. For this reason, the controller gain of the SMC-BNDO must be selected larger than the one for the SMC-SLDO to obtain the same control performance. This can be interpreted as another superiority of the SMC-SLDO.  

The comparison of the SMC, ISMC, SMC-BNDO and SMC-SLDO methods are given in Table \ref{Tab}. Moreover, the mean values of the errors for the SMC, ISMC, SMC-BNDO and SMC-SLDO are respectively $1.8839$, $0.0775$, $0.0130$ and $0.049$ while the settling times of the SMC-BNDO and SMC-SLDO in the presence of time-invariant mismatched uncertainties are respectively $16.5$ and $13.5$ seconds. The results show that the SMC-SLDO satisfies all listed requirements while other methods satisfies just some of them. Furthermore, the SMC-SLDO ensures better performance even if others satisfy the requirements. 

\begin{table}[h!]
\centering
\caption{Comparison of SMC, ISMC, SMC-BNDO and SMC-SLDO methods.}
\label{Tab}
\begin{tabular}{l  c  c  c  c }
\hline
 & SMC & ISMC & SMC-BNDO & SMC-SLDO \\
 \hline
 \hline
Nominal performance & + & - & + & + \\
\hline
Robust control performance against & - & + & + & + \\
mismatched time-invariant disturbance &  &  &  &  \\
\hline
Robust control performance against & - & - & - & + \\
mismatched time-varying disturbance &  &  &  & \\
\hline
Eliminating chattering effect with robust control  & - & - & + & + \\
performance for mismatched  time-invariant  disturbance &  &  &  &  \\
\hline
Eliminating chattering effect with robust control & - & - & - & + \\
performance for mismatched time-varying disturbance &  &  &  & \\
\hline
\end{tabular}
\end{table}

\section{Conclusion}\label{sec_conc}

The SMC-SLDO has been elaborated in this paper to remove mismatched uncertainties effects on systems. The stability of the SMC-SLDO is proven and the simulation results demonstrate that the NFS running in parallel with the conventional estimation law can estimate time-varying disturbances. The SMC-SLDO ensures the robust control performance in the presence of mismatched time-varying disturbances when compared to the SMC, ISMC, and SMC-BNDO. Moreover, it provides less settling and rise times, and overshoot than SMC-BNDO in the presence of mismatched time-invariant disturbances while it remains the nominal control performance in the absence of mismatched disturbances. Through the online SMC theory-based learning algorithm, the parameters of the NFS are autonomously adjusted to learn disturbances, and the novel SMC algorithm can drive states to the desired state trajectory. Thus, the proposed SMC-SLDO makes the system robust against mismatched disturbances and also eliminates the chattering effects on the control signal significantly. 

\appendix
\section{Calculation of $\dot{\tau}_{n}$}\label{Sec_appendix}

The time-derivatives of \eqref{eq_mu1} and \eqref{eq_mu2} are obtained as:
\begin{eqnarray}
\dot{\mu}_{1i}(\xi_1) & = & -2 N_{1i} \dot{N}_{1i}\mu_{1i}(\xi_1) \nonumber \\
\dot{\mu}_{2j}(\xi_2) & = & -2 N_{2j} \dot{N}_{2j}\mu_{2j}(\xi_2)
\end{eqnarray}
where
\begin{eqnarray*}\label{eq_N1iN2j}
N_{1i}=\Big(\frac{\xi_1-c_{1i}}{\sigma_{1i}}\Big), \quad  \dot{N}_{1i}  =   \frac{(\dot{\xi_1} - \dot{c}_{{1i}})\sigma_{1i}-(\xi_1 - c_{1i})\dot{\sigma}_{1i}}{\sigma^2 _{1i}} \nonumber  \\
N_{2j}=\Big(\frac{\xi_2-c_{2j}}{\sigma_{2j}}\Big), \quad 
\dot{N}_{2i} = \frac{(\dot{\xi_2} - \dot{c}_{{2i}})\sigma_{2i}-(\xi_2 - c_{2i})\dot{\sigma}_{2i}}{\sigma^2 _{2i}} 
\end{eqnarray*}
It is obtained by combining $N_{1i}$,$ N_{2j}$,  $\dot{N}_{1i}$ and  $\dot{N}_{2j}$
\begin{equation}\label{eq_N1iN2j_N1iN2j_dot}
N_{1i}\dot{N}_{1i}=N_{2i}\dot{N}_{2i}=\alpha_{1} sgn{( \eta )}
\end{equation}

The time-derivative of \eqref{eq_wij_normalized} is obtained as follows:
\begin{eqnarray}
\dot{\widetilde{w}}_{ij} & = & \frac{\Big(\mu_{1i}(\xi_1) \mu_{2j}(\xi_2)\Big)'\Big(\sum_{i=1}^{I}\sum_{j=1}^{J}w_{ij}\Big)}{(\sum_{i=1}^{I}\sum_{j=1}^{J}w_{ij})^2}  - \frac{\big(w_{ij}\big)\Big(\sum_{i=1}^{I}\sum_{j=1}^{J}\mu_{1i}(\xi_1)   \mu_{2j}(\xi_2)\Big)'}{(\sum_{i=1}^{I}\sum_{j=1}^{J}w_{ij})^2}
\end{eqnarray}

Since $\widetilde{w}_{ij} = \frac{w_{ij}}{\sum_{i=1}^{I}\sum_{j=1}^{J}\underline{w}_{ij}}$,
\begin{eqnarray}\label{dotwij_normalized}
\dot{\widetilde{w}}_{ij} & = & \frac{ \dot{\mu}_{1i}(\xi_1) \mu_{2j}(\xi_2)+ \mu_{1i}(\xi_1) \dot{\mu}_{2j}(\xi_2)}{\sum_{i=1}^{I}\sum_{j=1}^{J}w_{ij}}  -\frac{\widetilde{w}_{ij}\sum_{i=1}^{I}\sum_{j=1}^{J} \Big(\dot{\mu}_{1i}(\xi_1) \mu_{2j}(\xi_2)+ \mu_{1i}(\xi_1) \dot{\mu}_{2j}(\xi_2)\Big)}{(\sum_{i=1}^{I}\sum_{j=1}^{J}w_{ij})} \nonumber \\
& = &  \frac{-2 \Big(N_{1i} \dot{N}_{1i} + N_{2j} \dot{N}_{2j} \Big) \mu_{1i}(\xi_1)\mu_{2j}(\xi_2)}{\sum_{i=1}^{I}\sum_{j=1}^{J}w_{ij}}  -\frac{\widetilde{w}_{ij}\sum_{i=1}^{I}\sum_{j=1}^{J} \Big(-2 \big(  N_{1i} \dot{N}_{1i} +  N_{2j} \dot{N}_{2j} \big) \mu_{1i}(\xi_1)\mu_{2j}(\xi_2)\Big)}{ \sum_{i=1}^{I}\sum_{j=1}^{J}w_{ij}}  \nonumber \\
&=& -\widetilde{w}_{ij}\dot{K}_{ij} + \widetilde{w}_{ij} \sum_{i=1}^{I}\sum_{j=1}^{J}\widetilde{w}_{ij} \dot{K}_{ij} 
\end{eqnarray}
where
\begin{equation*}
\dot{K}_{ij} = 2 \Big(N_{1i}\dot{N}_{1i}+N_{2j} \dot{N}_{2j}\Big)
\end{equation*}

The time-derivative of \eqref{eq_taun} is obtained to find $\dot{\tau_n}$ as follows:
\begin{equation}
\dot{\tau_n} = \sum_{i=1}^{I}\sum_{j=1}^{J}(\dot{f}_{ij} \widetilde{w}_{ij}+ f_{ij}\dot{\widetilde{w}}_{ij})
\end{equation}

If \eqref{dotwij_normalized} is inserted to the aforementioned equation:
\begin{eqnarray}\label{eq_taun_dot}
\dot{\tau_n} =  \sum_{i=1}^{I}\sum_{j=1}^{J}\bigg(\Big(-\widetilde{w}_{ij}\dot{K}_{ij}+\widetilde{w}_{ij} \sum_{i=1}^{I}\sum_{j=1}^{J}\widetilde{w}_{ij} \dot{K}_{ij} \Big)f_{ij}+\dot{f}_{ij}\ \widetilde{w}_{ij} \bigg)
\end{eqnarray}

\eqref{eq_N1iN2j_N1iN2j_dot} is inserted to \eqref{eq_taun_dot}:
\begin{eqnarray}\label{eq_taun_dot_2}
\dot{\tau_n} &=&  \sum_{i=1}^{I}\sum_{j=1}^{J}  -4\alpha_{1} sgn{(s)} \Big(\widetilde{w}_{ij} f_{ij} - \widetilde{w}_{ij} f_{ij} \sum_{i=1}^{I}\sum_{j=1}^{J}\widetilde{w}_{ij} \Big) \nonumber \\
 &&  +\widetilde{w}_{ij}\dot{f}_{ij} 
\end{eqnarray}

Since $\sum_{i=1}^{I}\sum_{j=1}^{J}\widetilde{w}_{ij} =1$, \eqref{eq_taun_dot_2} becomes by using \eqref{eq_f_ij} as follows:
\begin{eqnarray}\label{eq_taun_dot_3}
\dot{\tau_n}  =  \sum_{i=1}^{I}\sum_{j=1}^{J}\Big(\widetilde{w}_{ij} \dot{f}_{ij} \Big) =  - \alpha_{2} sgn{( \eta )}
\end{eqnarray}

%%%Harvard (name/date)
%%\bibliographystyle{SageH}
%%%Vancouver (numbered)
\bibliographystyle{SageH} 
\bibliography{reference.bib}

\end{document}